\def\counterwithin#1#2{} % hack to avoid numbering within sections
\newif\iflong
\newif\ifverylong
\newif\ifarXiv
\def\copyrightline{%
  \scriptsize
  \vtop{\hsize\textwidth
    \nobreakspace\\
    \ifx\@EventLongTitle\@empty\else
%    \textit{Submitted} to the
    \@EventLongTitle.\\\fi
    %\hspace*{\wd\@tempboxa}\enskip
 This is the full version of a paper that will be %has been
 presented % and will appear as an extended abstract
 at
    \@EventShortTitle.  }}
\let\@oddfoot\@empty % get rid of colored bars
\definecolor{codegreen}{rgb}{0,0.6,0}
\definecolor{codegray}{rgb}{0.5,0.5,0.5}
\definecolor{codepurple}{rgb}{0.58,0,0.82}
\definecolor{codecyan}{rgb}{0.,0.0,0.5}
\definecolor{backcolour}{rgb}{0.95,0.95,0.92}
\newcommand{\defn}[1]{\textit{\textbf{\boldmath #1}}}
\let\paragraph=\subparagraph % hack for stupic lipics conventions
\let\corollary\@undefined
\let\endcorollary\@undefined
\let\lemma\@undefined
\let\endlemma\@undefined
\let\obs\@undefined
\let\endobs\@undefined
\theoremstyle{plain}
\newaliascnt{lemma}{theorem}
\newtheorem{lemma}[lemma]{Lemma}
\newaliascnt{corollary}{theorem}
\newtheorem{corollary}[corollary]{Corollary}
\newaliascnt{obs}{theorem}
\title{Minimum spanning blob-trees} % (GGweek 2024)}
\author[1]{Katharina Klost}
\author[2]{Marc van Kreveld}
\author[3]{Daniel Perz}
\author[1]{Günter Rote}
\author[4]{Josef Tkadlec}
\affil[1]{Freie Universität Berlin, Institut für Informatik\\
  \texttt{\{kathklost,rote\}@inf.fu-berlin.de}}
\affil[2]{Department of Information and Computing
        Sciences, Utrecht University\\
  \texttt{m.j.vankreveld@uu.nl}}
\affil[3]{University of Perugia\\
  \texttt{daniel.perz@unipg.it}}
\affil[4]{Charles University\\
  \texttt{josef.tkadlec@iuuk.mff.cuni.cz}}
\authorrunning{K. Klost,  M. van Kreveld, D. Perz, G. Rote, and J. Tkadlec}
\begin{document}
{
%\makeatletter\global\let\@oddfoot\@empty % temporarily get rid of blue bars
}

\maketitle
%\nolinenumbers

\begin{abstract}
  We investigate blob-trees,
  a new way of connecting a set of points, by a mixture
  of enclosing them by cycles (as in the convex hull) and
  connecting them by edges (as in a spanning tree).
  We show that a minimum-cost blob-tree
  for $n$ points can be
  computed in $O(n^3)$ time.
\end{abstract}

%  bound: 175 lines, version \today, %next meeting Mon. 20.1., 12:00. Deadline: January 21, 2025 AoE = (Wed noon)

\section{Introduction}

% Any introductory course on computational geometry will treat convex hulls and minimum spanning trees for a set of $n$ points in the plane. They are both minimum structures in a certain manner. In this paper we investigate a new structure, the \emph{blob tree}, that combines both and is optimal in terms of the total length of all edges that are used. A blob is a simple polygon defined on a subset of the points. Points inside a blob are considered connected to the blob boundary and do not need edges for this. Therefore, the convex hull itself is a special blob tree, but it need not be optimal. The minimum spanning tree (MST) is also a blob tree that need not be optimal.
Any introductory course on computational geometry will treat convex hulls and minimum spanning trees (MSTs) for a set $P$ of $n$ points in the plane. These are the least-cost structures that enclose or connect the points.
In this paper we investigate a new structure, the \emph{blob-tree}, that combines the ideas of enclosing and connecting. % a given set $P$ of $n$ points.
A \emph{blob} is a simple polygon, % defined on a subset of the points.
and all points of $P$ enclosed in a blob are considered to be connected to each other.
A \emph{tree-edge} is a segment between two points of $P$, and it connects its
two endpoints.
A blob-tree is a collection of blobs and tree-edges that collectively connects all of $P$,
see \autoref{fig:example-solution}.
\iflong
The convex hull and the MST of $P$ are special cases of a blob-tree.
\fi

We are looking for the blob-tree that minimizes the total length of all edges drawn: the perimeter of the blobs plus the tree-edges.
%, see Figure~\ref{fig:example-solution}.
\iflong As we show in 
\else
By
\fi
Lemma~\ref{lem:convex} below, %in this optimal solution 
\iflong the \fi blobs are convex and disjoint.
\iflong
Morever, when contracting the blobs to vertices, the tree-edges form a
% (plane). G: contraction deforms edges, "plane" is dubious.
tree, so the name ``blob-tree'' is warranted. 
\else
\looseness-1
\fi
%\todo{We claim this here but explicitly show it in \autoref{lem:convex}}
\todo[inline]{Pepa: consider commenting more generally on Minimum Spanning substructure in a hypergraph with subadditive weights on the hyperedges? In our case, the cost of a $k$-tuple with $k\ge 3$ is the length of the convex hull and the cost of a $2$-tuple is the edge length. There is work on minimum spanning substructures on hypergraphs that we could cite, e.g. (somewhat random): ``Approximating Minimum Spanning Sets in Hypergraphs and Polymatroids'' or ``Minimum Tree Supports for Hypergraphs and
Low-Concurrency Euler Diagrams''.}

\begin{figure}
    \centering
    \hskip-1,1cm
    \includegraphics{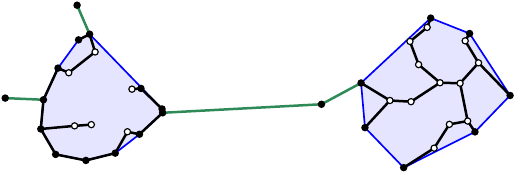}
    \qquad%\qquad
    \includegraphics{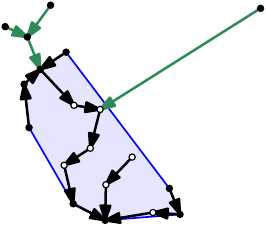}
    \caption{The optimal blob-trees on two point sets (computed by our implementation \cite{repo}). Blob edges are blue, tree-edges are green.
\iflong    
    The cost of the solution is the total length of blue and green. 
    \fi
    MST edges are black (and green, if they are part of the blob-tree).}
    \label{fig:example-solution}
\end{figure}

While blob-trees appear interesting in their own right,  similar
structures have been used in information visualization, in particular
in KelpFusion~\cite{mrsad-2013kelpfusion}. This visualization style is
based on \emph{shortest-path graphs}~\cite{de2011delineating}, a
one-parameter family of connected structures whose extremes are the
convex hull and the MST, % They are not, however, optimal in the sense
                         % of this paper.
%
%but optimizing a different criterion than the one we consider. %?
without regard to a specific optimization criterion
\ifverylong
More precisely,
the shortest path graph for a parameter $t\in [1,\infty)$ uses as the
distance between two points $p$ and $q$ the Euclidean distance to the
power~$t$. The shortest path graph is the union of all shortest paths
between pairs of points; for $t=1$ this is the complete graph and for
$t\rightarrow \infty$ this is the Euclidean MST. For KelpFusion, this
graph is then converted into a connected set of edges (not
necessarily convex blobs) by filling in faces.
\fi
Other related work considers placing shortest fences for subsets of points~\cite{aglr-gmsfs-20,arkin1993geometric,bcflr-fscsf-25}.

As seen in Figures~\ref{fig:example-solution} and~\ref{fig:versions}, an optimal solution might involve a tree-edge that crosses a boundary of a blob. 
If we disallow %\todo{consider moving this all the way to the conclusions}
such crossings (version (ii)), the optimal solution could get longer and might include non-convex blobs.
If we further required the blobs to be convex (version (iii)), the
optimal solution might become even longer, see
Figure~\ref{fig:versions}.
\ifverylong
%[...  yet another version, we have space in the full version:]
A~different version ignores the parts of tree-edges that lie inside a
blob when computing the length of a solution.
This is less natural. The solution might choose edges that run deep
into the blob, just to get a shorter distance to the blob boundary.
%Such edges would not be MST edges. % (MST property not discussed so far)
\fi

%There are three variants of blob trees: (i) all edges used form a plane graph and blobs are convex; (ii) all edges used form a plane graph but blobs need not be convex; (iii) the structure may have edge intersections. These variants get increasingly general and therefore their solutions may have lower weight. In variant (iii), blobs will be convex because convexifying a non-convex blob will always lower the cost and still satisfy the requirements, see \autoref{lem:convex}. Figure~\ref{fig:versions} shows the optimal blob-trees in each variant on the same point set.

\begin{figure}
    \centering
    \includegraphics{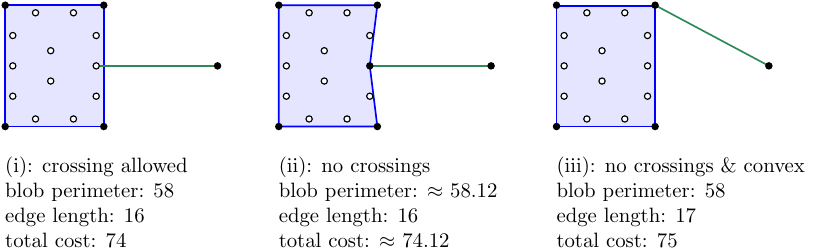}
    \caption{Three variants of blob-trees for the same point set, and their costs.}
    \label{fig:versions}
\end{figure}

\iflong In this paper, w\else W\fi e concentrate on the original problem.
%(version (i))\todo{version (i) undefined, just call it original problem? and the other two Variant 1 and Variant 2?}.
It turns out that the optimal solution can be obtained from the MST by replacing several of its subtrees by  blobs
(\autoref{lem:struct-MST});
%It has the property that edges bound a blob or are part of the MST of all points (or both);
the other two variants (ii) and (iii) do not have this property.
As our main result we show that the optimal blob-tree can be computed in $O(n^3)$ time, and we implement the algorithm in \textsc{Python}~\cite{repo}.
\iflong Efficient c\else C\fi omputation of optimal blob-trees in variants (ii) or (iii)
\iflong is still 
\else remains \looseness-1
\fi
open.

%In all variants, the convex hull is a 2-approximation of the optimal blob-tree.\todo{Pepa: should we prove this? Yes (in the appendix)}\todo{Proof idea: The convex hull perimeter minus an edge is a lower bound on the solution cost. But we can remove this sentence now and save a line.}

%In this paper we concentrate on variant (iii). It has the property that its edges bound a blob or are part of the MST of all points (or both); the other two variants do not have this property. We will show that we can compute such blob trees in $O(n^3)$ time. Efficient computation of blob trees in variants (i) or (ii) is still open. In all variants, the convex hull is a 2-approximation of the optimal blob tree.

Our algorithm is based on dynamic programming and builds upon ideas from \cite{eorw-fmakg-DCG92%eppstein1992finding
}. That paper shows how to compute a single optimal convex polygon (or blob) in a point set that has certain properties, using dynamic programming. For example, it can compute the smallest area or perimeter convex polygon with $k$ points on the boundary, or the smallest area convex polygon containing at least $k$ points inside or on the boundary in $O(kn^3)$ time.
We show how to deal with multiple blobs and with the minimum spanning connecting structure to compute an overall minimum solution.
\iflong

\fi
All proofs that are not
\ifarXiv
 \iflong given \fi
 in the main part are in
the appendix.
\else given here are in the full version \cite{fullversion}.
\fi 
\iflong\else\looseness-1\fi

%\todo[inline]{Define the problem, mention the variants and related work, we will focus on the convex-non-planar version. Add examples!}

\section{Preliminaries}
We assume that the \iflong point \fi
set $P$ is in general position.
In particular,
\iflong we assume that \fi the pairwise distances between points are unique and no three points are on a line.
This implies that the MST of the point set is %also
unique. We also assume that no two points have the same $x$- or $y$-coordinate.

%Every solution will consist of convex hulls of points, and of edges connecting those convex hulls. 
%When contracting the convex hulls to vertices the edges in an optimal solution will be a tree.
%We call the convex hulls \emph{blobs} and the edges connecting them \emph{tree-edges}.

%It is straightforward to show that any optimal blob-tree is non-crossing, except that a tree-edge may connect a point strictly inside a blob to its outside.
\begin{lemma}
\label{lem:convex}
    In any optimal solution: all blobs are convex polygons with vertices at $P$;
    any two blobs are disjoint; 
    %\todo{remove: no tree-edge crosses a blob twice; }
    and when contracting the blobs to vertices, the tree-edges form a tree.
    %; and no two tree-edges cross. % \todo{Pepa: possibly omit the last item.} 
\end{lemma}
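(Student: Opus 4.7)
The plan is to prove each of the three properties by an exchange argument: if an optimal blob-tree $T$ violated any of them, I would exhibit a strictly cheaper valid blob-tree, contradicting optimality. For \emph{convexity and vertices in $P$}, I would take any blob $B$ with enclosed set $S := P \cap B$ and replace $B$ by $\mathrm{conv}(S)$. The chain $\mathrm{perim}(B)\ge\mathrm{perim}(\mathrm{conv}(B))\ge\mathrm{perim}(\mathrm{conv}(S))$ — the first inequality being the standard fact that a simple polygon has perimeter at least that of its convex hull, the second being monotonicity of perimeter for nested convex bodies — gives a strict decrease unless $B$ already equals $\mathrm{conv}(S)$, in which case $B$ is convex with all vertices in $S \subseteq P$. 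Every point of $S$ stays enclosed, so connectivity is preserved; any point of $P \setminus S$ newly absorbed only adds connectivity.

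For \emph{disjointness}, if two (now convex) blobs $B_1,B_2$ have overlapping interiors I would replace them by $\mathrm{conv}(B_1\cup B_2)$. The perimeter of the convex hull of two intersecting convex regions is strictly smaller than the sum of their perimeters, because the overlapping boundary arcs between successive crossings are shortcut by the hull; all previously enclosed points remain enclosed. For the \emph{tree structure}, the contracted graph is connected by the very definition of a blob-tree, and if the tree-edges contained a cycle, then removing any edge on it would preserve connectivity while strictly reducing cost.

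The main obstacle is that these three moves can interact: convexifying a blob may absorb points lying inside other blobs or on tree-edges, and merging blobs may swallow tree-edges or create new overlaps with a third blob. The cleanest way I see to handle this is to iterate all three moves as long as any is applicable. Each iteration strictly decreases the non-negative cost, and the intermediate solutions range over the finite combinatorial space of blob-trees whose blobs are convex polygons on subsets of $P$ with tree-edges between points of $P$; hence the iteration terminates in a blob-tree satisfying all three properties with cost at most that of $T$, so optimality of $T$ forces $T$ itself to satisfy them.
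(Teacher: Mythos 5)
Your proof is correct and follows essentially the same route as the paper: convexify a blob to the hull of its enclosed points of $P$, merge intersecting blobs into the hull of their union, and delete a tree-edge on a cycle of the contracted graph, each move strictly decreasing the cost. The iteration-and-termination machinery in your last paragraph is unnecessary overhead -- since the statement is about an \emph{optimal} solution, a single improving move already yields the contradiction, so the interactions between moves never need to be resolved.
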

\begin{proof}
    The claims follows from the triangle inequality.
    First, we can replace any blob that does not have the claimed property % in a blob-tree
    with the convex hull
    of the points of $P$ contained in it, 
    creating a blob-tree with a strictly smaller total cost.
    %Since the number of blob-trees is finite, in any optimal solution all blobs are convex.
%    
    Similarly, two intersecting %convex
  blobs $B_1$ and $B_2$ can be replaced by the convex hull of the union $B_1\cup B_2$.
  %Likewise, instead of a tree-edge $e$ that crosses a convex blob $B_1$ twice, we can take the convex hull of $B_1\cup e$. \todo{P: Thanks for catching this, I hallucinated. Luckily it is not needed later. G: INDEED THIS ARGUMENT IS NOT VALID. K: I am not 100\% sure I see the last claim. If both endpoints of $e$ are outside of \(B_1\) we need might need \autoref{lem:struct-MST}?}
    
  Contraction of blobs creates an abstract graph $G$ in which every
  blob and every singleton point that is not in a blob is represented
  by a vertex.
  For every tree-edge, $G$ contains an edge between the corresponding
  vertices.  Since the blob-tree is connected and spans all points,
  $G$ is connected. If $G$ contained a cycle, then any tree-edge
  corresponding to an edge from such a cycle could be removed from the
  blob-tree. So the graph $G$ is a tree.
%
%    Finally, since the blob-tree is connected and spans all points, %outside the blobs, 
%    upon contracting blobs to vertices the tree-edges form a connected spanning graph. If the graph contained a cycle, then any edge from such a cycle could be removed, so the graph is a tree.% If $(a,b)$ and $(c,d)$ are two crossing tree-edges, then without loss of generality the unique path from $a$ to $c$ in the tree avoids both $b$ and $d$, and then we can replace a pair $(a,b)$, $(c,d)$ with a pair $(a,d)$, $(b,c)$ that has a shorter total length.
\end{proof}
%\todo[inline]{Add more preliminaries here? }

Throughout the paper, $T$ will denote the minimum spanning tree (MST) of~$P$.
We select the lowest point $r$
as the root and direct all edges of $T$ towards% the root
~$r$. For a node $u$, we denote its parent in $T$ by $u'$ and the directed edge from \(u\) to its parent by $uu'$.

\section{Structural insights}
The following key lemma with its immediate corollary shows that the optimal solution can be obtained from the MST by replacing some of its subtrees by the respective blobs.
%only MST-edges can be tree edges in an optimal solution.
%\todo{Pepa: I'd like to merge the lemma+cor into 1 statement and adapt the proof.}

\begin{lemma}\label{lem:struct-MST}
Let $T$ be the MST. Then, in any optimal blob-tree $S$,
every blob is a convex hull of some subtree of $T$ and
every tree-edge is an edge of $T$. % \qed
%Every MST-edge of the point set that is not fully inside a blob belongs to the optimal blob-tree.
\end{lemma}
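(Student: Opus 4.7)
The claim has two parts, which I address in turn.

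\textbf{Part 1 (every tree-edge is in $T$).} By \autoref{lem:convex}, contracting the blobs of $S$ yields a tree on super-nodes, so every tree-edge $uv$ is a bridge. Suppose for contradiction that $uv\notin T$. Then removing $uv$ partitions $P$ into nonempty sets $P_u\ni u$ and $P_v\ni v$; the unique $T$-path from $u$ to $v$ crosses this partition at some edge $f\in T$, and the MST cycle property applied to the non-tree edge $uv$ (together with general position) gives $|f|<|uv|$. Replacing $uv$ by $f$ yields a valid blob-tree of strictly smaller cost, contradicting optimality.

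\textbf{Part 2 (every blob's points induce a subtree of $T$).} Let $F\subseteq T$ denote the tree-edges of $S$; by Part 1 and \autoref{lem:convex}, $|F|=s-1$ where $s$ is the number of super-nodes. Since every $F$-edge has endpoints in distinct super-nodes, no $F$-edge lies inside any super-node, and so $T[P_{SN}]=(T\setminus F)[P_{SN}]$ for each super-node $SN$. Consequently $T[P_{SN}]$ is connected iff $P_{SN}$ is entirely contained in one of the $s$ components of the forest $T\setminus F$. As both the super-node partition and the component partition of $P$ have exactly $s$ parts, the claim reduces to showing these two partitions of $P$ coincide.

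Suppose they do not. Then some component of $T\setminus F$ must contain points from two distinct super-nodes, yielding an MST edge $e$ whose endpoints lie in different super-nodes while $e\notin F$; adding $e$ creates a cycle $C$ in the super-node graph. Note that after Part~1, the set $F$ is already a minimum-weight spanning tree of the auxiliary graph whose nodes are the super-nodes and whose edges are precisely the crossing MST edges of $T$ (weighted by length) --- for otherwise one could cheapen $S$ by swapping tree-edges while keeping blobs fixed. Hence, by the cycle property applied to this auxiliary graph, $e$ is the heaviest edge of $C$, so a bare tree-edge swap cannot improve $S$.

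The improvement must therefore combine a blob modification with a tree-edge adjustment. My plan is to examine $C$ at the point level: its edges pull back to a subforest of $T$ that only closes into a cycle after identifying two distinct points of some super-node $V$ on $C$, and these two points must lie in different components of $T[V]$, so $V$ is a blob with disconnected $T[V]$. A local modification of $V$ --- e.g.\ splitting $V$ along its $T[V]$-components and using $e$ to reconnect, or evicting a carefully chosen hull vertex and reconnecting it via a different MST edge --- should strictly reduce cost by a combination of perimeter savings (triangle inequality on convex hulls) and tree-edge accounting. The main obstacle is quantifying this improvement in all cases; verifying it requires carefully tracking how perimeters, MST edge lengths, and the cycle $C$ interact, and is the technical heart of the lemma.
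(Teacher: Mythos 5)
Your Part 1 is correct: the cut/cycle-property exchange argument cleanly shows every tree-edge of $S$ must be an MST edge, and the replacement of $uv$ by $f$ is a valid, strictly cheaper blob-tree. (The paper obtains this half differently, as a counting consequence of a stronger intermediate claim, so your route here is a legitimate shortcut for that half.)

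Part 2, however, has a genuine gap, and you say so yourself: after correctly observing that $F$ is an MST of the auxiliary super-node graph and that therefore no bare tree-edge swap can help, you only sketch a ``plan'' (split the blob $V$ along its $T[V]$-components, or evict a hull vertex) and leave the cost accounting -- which is the entire content of this half of the lemma -- unproven. The missing ingredient is the paper's key intermediate claim: \emph{every MST edge is either a tree-edge of $S$ or has both endpoints in a common blob}. This is exactly the statement that each component of $T\setminus F$ sits inside a single super-node, which together with your counting ($s$ components vs.\ $s$ super-nodes) finishes your reduction. The paper proves it by taking a violating MST edge $e=u'u''$, following the alternating sequence of tree-edge paths and blobs connecting $u'$ to $u''$ in $S$, and locating where this sequence switches between the two sides of the cut induced by deleting $e$ from $T$. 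If the switch is at a path edge, the cut property gives a cheaper swap; if the switch happens inside a blob $B$ between points $w_i$ and $v_{i+1}$, the paper \emph{merges} everything along the connection, together with $e$, into one convex blob, and shows the perimeter absorbed into the interior contains a curve from $w_i$ to $v_{i+1}$ of length at least $\lVert w_iv_{i+1}\rVert>\lVert e\rVert$, which pays for adding $e$. Note this goes in the opposite direction from your proposed fix: the winning move is to merge blobs along the connection, not to split the offending blob (splitting a convex blob into convex hulls of its $T$-components need not save any perimeter at all, and you would still have to pay for $e$). Once the intermediate claim is in hand, connectivity of each blob in $T$ follows in a few lines: the first MST edge $x\bar x$ on the $u$--$v$ path leaving $B$ must be a tree-edge, and it is redundant because $\bar x$ reaches $v\in B$ through the remaining MST edges (each a tree-edge or internal to a blob), contradicting optimality.
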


%\subsection{CONTINUE}

\iflong
Lemmas~\ref{lem:convex} and~\ref{lem:struct-MST} immediately yield the following corollary.
\fi
\begin{corollary}
\label{coro:connected-inside-blob}
    Let $B$ be a blob in any optimal blob-tree $S$. Let $T$ be the MST. Then:
\begin{enumerate}[\rm(a)]
\item
A path in $T$ that leaves $B$ never comes back to $B$.
    \item 
The subgraph of $T$ induced by the points in $B$ is connected.
\item
No MST edge with both endpoints outside $B$ intersects $B$.
%No blob $B$ can be cut by an MST edge with both endpoints outside $B$. %can cut through a blob.
\qed
\end{enumerate}
\end{corollary}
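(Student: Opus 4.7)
The plan is to apply \autoref{lem:struct-MST} to $B$: it supplies a subtree $T_B\subseteq T$ with $B=\mathrm{conv}(T_B)$, and I will identify its vertex set with $V_B:=P\cap B$ (any enclosed $P$-point can be appended to the subtree without changing its convex hull, so one may take the subtree to be maximal). Part~(b) is then immediate, since $T_B=T[V_B]$ is connected by construction. For part~(a), I would invoke the standard characterization that a vertex subset of a tree induces a connected subgraph iff it is closed under taking tree paths: once a $T$-path visits a vertex outside $V_B$, it cannot return to $V_B$, else two $V_B$-vertices on the path would be joined in $T$ by both the excursion outside $V_B$ and the all-inside $T_B$-path, violating uniqueness of paths in the tree.

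The main step is part~(c). Suppose for contradiction that an MST edge $uv$ with $u,v\notin B$ meets $B$. By convexity of $B$, the intersection is a chord $[a,b]\subset uv$ with $a,b\in\partial B$, and in general position $a,b$ lie on two distinct hull edges of $B$ (not at vertices). Let $\ell$ be the line through $a,b$, so $\ell\cap B=[a,b]$. Since $B$ is a genuine $2$-dimensional convex region and $[a,b]$ is an interior chord, the hull vertex set $V_B$ must split nontrivially across $\ell$ into $V_B^+\cup V_B^-$---otherwise $B=\mathrm{conv}(V_B)$ would sit in a closed half-plane bounded by $\ell$, which is incompatible with $a,b\in B\cap\ell$ when general position keeps all of $V_B$ strictly off $\ell$. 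Connectedness of $T_B$ then yields an edge joining $V_B^+$ to $V_B^-$; both endpoints being in the convex set $B$, this edge lies inside $B$, and so its intersection with $\ell$ is a point of $B\cap\ell=[a,b]\subset uv$. Thus $uv$ and this $T_B$-edge are two MST edges crossing in the plane, contradicting the standard planarity of planar MSTs (two crossing MST edges can always be rewired, via the triangle inequality on their four endpoints, into a strictly shorter spanning tree).

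The step I expect to be hardest is~(c). A natural first attempt---using that MST edges are Gabriel edges (empty diametral disk on $uv$) and trying to force some vertex of $V_B$ into that disk---does not close when the hull edge of $B$ through which $uv$ exits is long relative to $|uv|$, as the two endpoints of that hull edge can both sit outside the disk. The planarity argument above sidesteps the disk entirely by exploiting the fact that $T_B$ must cross the chord $[a,b]$ in order to remain connected, after which non-crossing of MST edges delivers the contradiction.
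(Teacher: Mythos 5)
The paper gives no written proof of this corollary: it declares it ``immediate'' from Lemmas~\ref{lem:convex} and~\ref{lem:struct-MST} and closes the statement itself with a q.e.d., so your task was to supply the missing details, and what you supply is correct. For (a) and (b) you are essentially unpacking what the second half of the proof of \autoref{lem:struct-MST} actually establishes (for any two points of $P$ in the same blob, the $T$-path between them stays in $B$), combined with uniqueness of paths in a tree; the one wrinkle is your claim that any enclosed $P$-point ``can be appended to the subtree without changing its convex hull''. Appending $p$ to a subtree of $T$ means adjoining the $T$-path from $p$ to that subtree, and the assertion that this path stays inside $B$ is precisely statement (b); so rather than arguing by maximality you should simply read off (b) from the lemma's proof (which literally derives a contradiction from two points of the same blob whose $T$-path leaves $B$). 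Part (c) is the one place where a genuine argument is needed, and yours is sound: by general position the intersection $uv\cap B$ is a nondegenerate chord avoiding the vertices of $B$, the points of $P$ in $B$ lie strictly on both sides of the line through $u$ and $v$, connectivity of the induced subtree (part (b)) then forces some MST edge with both endpoints in $B$ to cross that chord, and two crossing edges contradict the planarity of the Euclidean MST (e.g.\ because it is a subgraph of the Delaunay triangulation). This is a clean, and plausibly the intended, way to make the paper's ``immediate'' claim precise; your aside on why the Gabriel-disk approach does not close is a reasonable sanity check but is not needed for the proof.
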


% \todo[inline]{The current argument for Corollary 3a is not sufficient:
% It is conceivable that the MST goes into and out of a blob.
% This would not contradict Lemma 2, and blobs would still be disjoint.
% We could then probably enlarge the blob as in Figure 2. So we should be safe.
% But the proof must be adapted.}
% \begin{corollary}
% \label{coro:connected-inside-blob}
%     Let $T$ be an MST. Then there exists an optimal solution $S$ 
% with the following property:
% \begin{enumerate}[\rm(a)]
%     \item 
% For every blob $B$ of %an optimum blobtree, %
% $S$, 
% the points inside $B$ are connected via 
% edges of $T$ inside $B$. / ..., the subgraph of $T$ induced by the points in $B$ is connected.
% \item {[NEW, direct consequence of (a).]}
% No blob $B$ can be cut by an MST edge with both endpoints outside $B$. %can cut through a blob.
% \item { [NEW. WOULD ALSO BE CONVENIENT TO HAVE. I AM PUTTING IT HERE FOR THE MOMENT.]}
% A tree path cannot go out of a blob and later come in again. [Maybe THIS should become (a), and the current part (a) follows from it?]
% \qed
% \end{enumerate}
% \end{corollary}

%In this way, since every blob arises as the convex hull of a connected subtree of~$T$,
%As an immediate consequence, 
%we obtain:

%%%%%%%%%%%%%%%%%%%%%%%%%%%%%%%%%

\begin{figure}
\centering
\includegraphics{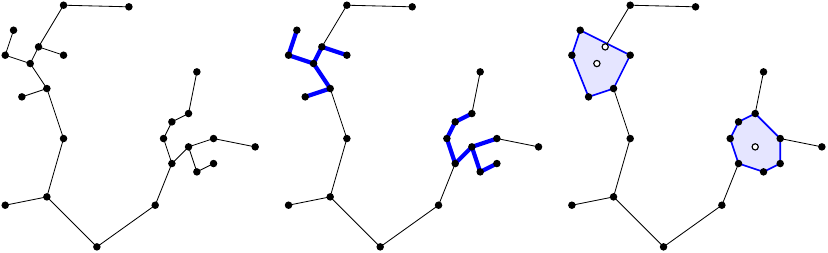}
\caption{\iflong A minimum spanning tree,
\else An MST, \fi a potential subset of
%guess of edges that are not tree-edges 
non-tree edges (fat blue),
\iftrue and \fi the resulting blob-tree.}
\label{fig:blob-intuition}
\end{figure}

\ifverylong %\todo[inline,color=yellow]{Cut out for shortening purposes:
\autoref{lem:struct-MST} and \autoref{coro:connected-inside-blob} are the foundation
of % an efficient algorithm, as it allows us to focus on the interaction of MST-edges and diagonals of the blobs, thus limiting the possible configurations.
 a straightforward exponential-time % finite
algorithm for computing the optimum blob-tree, see \autoref{fig:blob-intuition}.
After computing the MST $T$, we \emph{guess} which of the $n-1$ MST edges are tree-edges (the black edges in the middle of \autoref{fig:blob-intuition}) and which edges are inside blobs (the blue edges). 
Removing the black edges cuts the tree into blue components.
We take the convex hull of each blue component as a blob and add the black edges,
as shown in the right of \autoref{fig:blob-intuition}.
In total, there are $2^{n-1}$ candidate solutions, and we take the one that minimizes the total cost. 
%(Candidate solutions where some resulting blobs are not disjoint can be immediately discarded.)
\fi %}

%We show that t
The structure of optimal blob-trees can be used to develop a polynomial-time algorithm by dynamic programming. We extend the structure in two simple ways.

\begin{figure}
\centering
\includegraphics{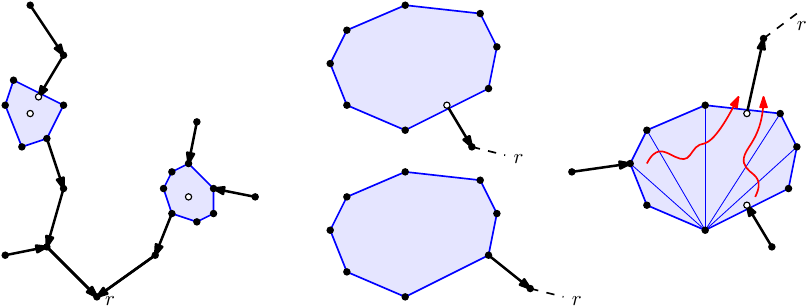}
\caption{A rooted, directed blob tree, two types of exits, and a bottom-vertex triangulation of a blob with a dynamic programming order through it towards the root.}
\label{fig:blob-intuition2}
\end{figure}

Firstly, we use the direction of all edges of $T$ towards the root $r$. If we were to contract each blob to a single point, then we would have a normal rooted tree. Consequently, every blob that does not contain the root has one unique exit edge and every blob has zero or more entry edges. The exit from a blob towards the root can be via an edge from one of its hull vertices or via an edge from an interior blob point that crosses a hull edge. Similarly, blob entrances come in two types, see \autoref{fig:blob-intuition2}.

Secondly, we consider a bottom-vertex triangulation of each blob,
where
%In this triangulation, % of a blob, 
the lowest point is connected to all other vertices on the boundary of the blob. We call the diagonals of the triangulation \emph{chords} and the exterior edges \emph{walls}.
The chords and triangles serve to have simple (constant size) elements to be used to define substructures in the dynamic program. Optimal solutions to smaller subproblems are ``transported'' through a blob from any entrance to the unique exit, using the chords in between, to build solutions to larger subproblems.
\iflong\else\looseness-1\fi

There will be three types of subproblems: (i) edge subproblems, for
each MST edge that is used as a blob-tree edge,
(ii) chord subproblems, and (iii) wall subproblems. There are $n-1$ of the first type % (because they are defined by MST edges) 
and $O(n^2)$ of the latter two types.
% Since walls and chords are not necessarily part of $T$, they do not have a consistent implied orientation, and we must define a forward and backward direction for them in order to define the subproblems recursively.
Walls and chords do not have an implied orientation (unlike the edges
of~$T$). Each segment must be considered as a potential wall or chord
separately in both orientations. 

For uniformity, we draw chords from the
bottom vertex of a blob also to its neighbors on the blob boundary,
cutting off two degenerate \emph{digons} from the blob,
see \autoref{fig:parition}.
\todo{Marc: This is not really intuition anymore.}
Thus,
% For ease of description we also view the two segments from the bottom vertex of a blob to its neighbors as chords; here the wall and chord coincide. In this case
a 
 blob with $k$ boundary points has $k-1$ chords and $k$ walls,
 %. The subdivision elements are
 and it is cut into 
 $k-2$ triangles, each bounded by two chords and one wall, and two digons, bounded by %consisting  of
 one chord and one wall.
 \iflong\else\looseness-1\fi

Every chord has a \emph{forward side}, towards the tree root,
and a \emph{backward side}; both are defined later in detail.
\todo{G. Intuition again}
Our algorithm %We solve the problems
proceeds from
the leaves towards the root, as indicated by the red arrows in
\autoref{fig:blob-intuition2}.
Thus, the subproblem associated to a chord consists of all points that are on the backward side,
including the points that are in other blobs hanging off
the current blob in the blob tree. It turns out that
the points that are involved in this subproblem can be
uniquely identified with the help of the MST, without
knowing the optimum blob tree.

%%%%%%%%%%%%%%%%%%%%%%%%%%%%%%%%%%

\begin{figure}
    \centering
    \includegraphics{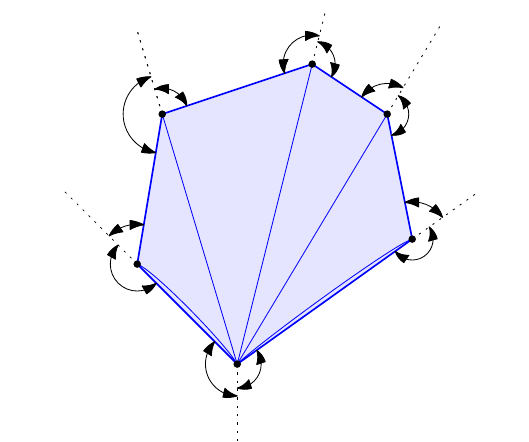}
    \caption{A blob is divided into triangles and digons.}
    \label{fig:parition}
\end{figure}

\begin{figure}
    \centering

\ifarXiv    
\tabskip = 0pt plus 0.4 fil
\halign to \textwidth{\hfil#\hfil\tabskip = 0pt plus 1 fil
&\hfil#\hfil
&\hfil#\hfil\tabskip = 0pt plus 0.4 fil
\cr
    \includegraphics[page=1]{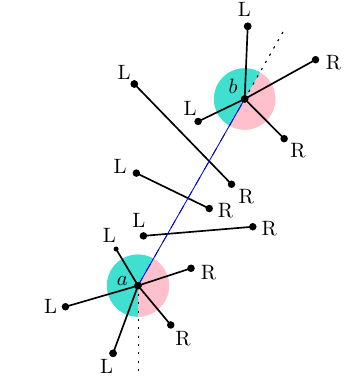}%
&%
% \fbox
    {\hskip-8mm
    \includegraphics[page=2]{classify_chord_triangles.pdf}%
    \hskip-8mm}% 
\cr
(a)&(b)\cr
}
\else
\includegraphics[page=1]{classify_chord_triangles.pdf}%
\fi
%\todo[inline]{Update this caption. G: Done.}
    \caption{\ifarXiv(a) \fi Labeling endpoints of MST-edges $X^+_{ab}$
    crossing $ab$ or incident to $a$ or $b$ as left (L) and right (R). 
    The neighbors of %edges incident to 
    $a$ and $b$ are classified as right (R)
    %\emph{right-incident} edges 
    with respect to $ab$ if the edges emanate in the pink region, otherwise they are
    left (L).
%    \emph{left-incident} edges [Is there a more
%    technical-sounding term for these concepts?].
    \iflong
    (The shown edges cannot all be MST edges simultaneously.)
    \fi
\ifarXiv    (b)~%Illustration of the p
    Proof of \autoref{lem:enpoints_r_or_l}.\fi
    }
    \label{fig:left-right}
\end{figure}
\begin{figure}
    \centering
  \includegraphics[page=4]{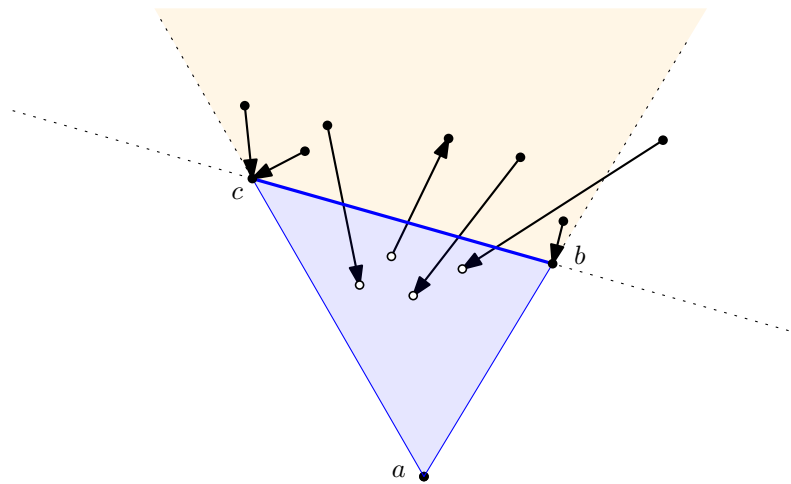}%

%\todo[inline]{An entry edge in \(X^+_{bc}\) could originiate outside of the wedge. Do we want to illustrate this in the figure? YES!}
    \caption{Six entry edges and one exit edge for the triangle $abc$.
 \iflong   The four edges that cross $bc$ are associated to the wall $bc$. \fi
    %FiveClassifying the endpoints of MST-edges crossing or incident to a wall.
  %  [G: get rid of R and L, put arrows (MST-edge) on the blue edges. Towards the root? Show (shade) the sector bac
sd  }
    \label{fig:classify_sides}
\end{figure}
  
%\paragraph{Classification of chords, walls and triangles.}

Let \(a,b\) be a pair of points. 
%[ Let \(X_{ab}\) be the set of edges of \(T\) that cross the line segment \(ab\) or are incident to \(a\) or \(b\). G:$\to$ ]
Let \(X_{ab}\) be the set of edges of \(T\) that cross the line segment \(ab\), and
let $X^+_{ab}$ denote the edges in  \(X_{ab}\) plus the edges of \(T\) that are incident to \(a\) or~\(b\). 

%We will call the line segments \(ab\) with \(a\) being the lowest point a \emph{chord}. 
\subparagraph{Sidedness.} 
Assume that \(a\) is below \(b\) and consider the \emph{boundary curve} that is obtained from the union of the vertical ray below \(a\) and the ray from \(a\) through \(b\), see \autoref{fig:left-right}a.
This curve subdivides the plane into a left and a right side. 
We call a point \(v\) that is a vertex of an edge in \(X_{ab}\) a \emph{left} or \emph{right endpoint}, depending on if it lies
left or right of
%in the left or right side of the plane with regard to
the boundary curve. % defined by \(ab\).
The 
%neighbors of $a$ and $b$
endpoints of edges in \(X^+_{ab}\setminus X_{ab}\) that are not \(a\) or \(b\) 
are classified analogous.
\iflong \else\looseness-1\fi

%\todo{K:We discussed that maybe we do not add the incident vertices to \(X_{ab}\) but I do not remember the conclusion of the discussion.}
%[ What happens if $ab\in T$? ] [K: Then $X_{a,b}$ is simply the combined neighborhood of $a$ and $b$ but we should keep in mind if this leads to any problems later on. G: I have (tentatively) distinguished X and U. Perhaps $\bar X$ instead of U? Or $X_0$ versus $X$?] 

\begin{lemma}\label{lem:enpoints_r_or_l}
    Every edge in \(X_{ab}\) has exactly one left and one right endpoint.
\end{lemma}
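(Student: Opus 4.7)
My plan is to argue by contradiction using the Gabriel property of the MST (for every MST edge $pq$, the open disk with diameter $pq$ contains no other point of $P$). Assuming some edge $e = pq \in X_{ab}$ has both endpoints on the same side of the boundary curve, I will show that $a$ lies strictly inside the disk with diameter $pq$, which gives the contradiction.

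First, I would use the same-side assumption to locate a second crossing of $e$ with the boundary curve. Since $e$ already crosses $ab$ transversally at a single interior point and must meet the boundary curve an even number of times, $e$ must also cross either the vertical ray below $a$ or the ray from $a$ through $b$ extended past $b$. The latter is immediately excluded because it is collinear with $ab$ and a segment meets a line at most once. Hence $e$ crosses the vertical ray below $a$ at a second interior point $z$ sitting directly below $a$.

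The key step is then the sign analysis. Placing $a$ at the origin, both the vertical ray below $a$ (the negative $y$-axis) and the horizontal line through $a$ (the $x$-axis) are crossed transversally by $e$: the $y$-axis at $z$, and the $x$-axis because $z$ lies below $a$ while the other known interior point $x \in pq \cap ab$ lies above $a$. This forces $p_x$ and $q_x$ to have opposite signs (witnessed by the $y$-axis crossing) and $p_y$ and $q_y$ to have opposite signs (witnessed by the $x$-axis crossing). Consequently $\vec{ap}\cdot\vec{aq} = p_x q_x + p_y q_y < 0$, the angle $\angle p a q$ is strictly obtuse, and $a$ lies strictly inside the disk with diameter $pq$, contradicting the Gabriel property.

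The main obstacle is identifying the correct geometric witness to the non-MST nature of $pq$. Once it is shown that the extra crossing must land on the vertical ray below $a$ rather than on the ray past $b$, the subsequent sign argument is short, and the Gabriel property (a standard consequence of the MST cycle property applied to the triangle $p,a,q$) supplies the contradiction. General position ensures throughout that sidedness, signs of coordinates, and the strict inequalities are all well defined.
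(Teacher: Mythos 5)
Your proof is correct and follows essentially the same route as the paper's: a same-side edge must cross the boundary curve a second time, necessarily on the vertical ray below $a$, which forces $a$ into an empty region associated with the MST edge $pq$ and contradicts the cycle property. The only cosmetic difference is that the paper places $a$ in the lune of the two endpoints (which needs only $\angle paq>60^\circ$), while you place it in the diameter disk via the sharper obtuse-angle claim; your explicit sign analysis is in fact more detailed than the paper's one-line justification.
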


%\todo[inline]{Carefully define the components, sides and valid chords and sides. Keep the discussion had on 9.1. in mind that the definition of valid differs for chords and sides}

\paragraph{Valid chords.}
Removing \(X^+_{ab}\) from \(T\) gives a forest.
We call the segment \(ab\) a \emph{valid chord} if no component in \(T\setminus X^+_{ab}\) contains both a left and a right endpoint.
%if the unique path connecting a left to a right endpoint contains at least one edge of \(X_{ab}\).
%[G: I like the previous definition better:]
%[Adapt the definition of valid to mirror the statements of lemma 1]
For a valid chord, the components of \(T\setminus X^+_{ab}\) can be partitioned into \emph{left} and \emph{right components}, depending on the characterization of the vertices from \(X^+_{ab}\) contained in the component.
\iflong
Note that this partitions \(P\setminus \{a,b\}\).
\fi
% \begin{observation}
%     Let \(ab\) be a valid chord. Every path in \(T\) that connects a point in a left component to a point in a right component uses at least one edge in \(X^+_{ab}\).
% \end{observation}

The side assigned to the root \(r\) is the \emph{forward} side of the chord and the other side the \emph{backward} side. 
(If the lower point $a$ of the chord is $r$, we arbitrarily declare the right side as the forward side. 
This convention simplifies the treatment of the ``root blob''.)
%\textbf{TO BE PUT IN APPROPRIATE PLACE:} The trick of assigning $a$ to the right, if $a$ is the root leads to the fact that there are no root triangles and only root digons are needed. 
% Edges \(uu'\in X_{ab}\) with \(u\) in the backward side are \emph{forward edges},
% \todo{also if $u'=a$ or $b$? What if $u=a$ and $u'$ on fw-side?}
% the remaining edges in \(X_{ab}\) are \emph{backward edges}.
% Furthermore, we call the vertices incident to edges in \(X_{ab}\) forward and backward vertices, depending on their side.
% The points \(a\) and \(b\), which are 
% isolated vertices in
% \(T\setminus X_{ab}\), are forward vertices by convention.
If the forward side is the right side, then \(ab\) is a \defn{right-facing} chord, otherwise it is a \defn{left-facing} chord.
\iflong\else\looseness-1\fi

\todo[inline]{Note that validity and the sidedness of a chord can be determined without reference to a solution. ?]}

\begin{lemma}
\label{lem:valid}
All chords in an optimal solution are valid. % in the above sense.
%\todo{[ REMOVED: All walls are candidate walls. ]}
    %All sides and chords of blobs in optimal solutions are valid in the above sense.
 \end{lemma}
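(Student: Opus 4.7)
My plan is to prove the statement directly (rather than by contradiction), by explicitly describing every component of $T\setminus X^+_{ab}$ for a chord $ab$ of a blob $B$ in an optimal blob-tree. I will invoke Lemma~\ref{lem:struct-MST} to write $B=\mathrm{conv}(V_B)$ for the vertex set $V_B$ of some subtree $T_B\subseteq T$; note that $a,b\in V_B$ because both are vertices of the polygon $B$. The first step is to classify the edges of $T$ relative to $V_B$ into three types: edges of $T_B$, \emph{bridges} (one endpoint in $V_B$, one outside), and \emph{external} edges (both endpoints outside $V_B$). Corollary~\ref{coro:connected-inside-blob}(b) gives that $T_B$ is connected, so every component of the external forest $T-V_B$ is attached to $V_B$ by exactly one bridge. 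Corollary~\ref{coro:connected-inside-blob}(c) gives that no external edge intersects $B$; in particular no external edge lies in $X^+_{ab}$.

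Next I handle the interior of $V_B$. Since $B$ is convex and $ab$ is a chord of $B$, any edge of $T_B$ with endpoints on opposite sides of the line through $a,b$ must cross the segment $ab$ itself and hence lies in $X_{ab}$. So every component of $T_B\setminus X^+_{ab}$ other than the trivial singletons $\{a\}$ and $\{b\}$ lies entirely on one side of line $ab$. Because $a$ is the lowest vertex of $B$, every vertex of $V_B\setminus\{a\}$ lies strictly above $a$, where the boundary curve coincides with line $ab$; so on $V_B$ the notions of ``left/right of line $ab$'' and ``left/right of the boundary curve'' agree, and each such component is either purely left or purely right.

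Finally I address the external subtrees. For any external subtree $K$ of $T-V_B$, attached by its unique bridge $e=wv$ with $v\in K$, the only edge of $T$ incident to a vertex of $K$ that can lie in $X^+_{ab}$ is $e$ itself: external edges are excluded by the first step, and no second bridge is incident to $K$. Hence at most one vertex of $K$ can be classified as a left or right endpoint, namely $v$ (and only when $e\in X^+_{ab}$). Combining the three steps, every component of $T\setminus X^+_{ab}$ is either a trivial singleton $\{a\}$ or $\{b\}$, a component of $T_B\setminus X^+_{ab}$ possibly enlarged by external subtrees attached through non-removed bridges (which contribute no endpoints), or an isolated external subtree containing the single endpoint $v$; in each case endpoints of only one type appear, so $ab$ is valid. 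I expect the main obstacle to be the external-subtree case: one has to argue that a non-removed bridge never drags a ``wrong-side'' endpoint into a component, which is precisely what the uniqueness of bridges (via Corollary~\ref{coro:connected-inside-blob}(b)) and the non-intersection property (via Corollary~\ref{coro:connected-inside-blob}(c)) together supply.
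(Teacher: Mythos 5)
Your proof is correct and rests on the same two pillars as the paper's own argument: Corollary~\ref{coro:connected-inside-blob} to control which MST edges can belong to $X^+_{ab}$ and to isolate endpoints lying outside the blob, and the convexity observation that an edge inside $B$ with endpoints on opposite sides of the line through $a,b$ must cross the segment $ab$ itself (the paper phrases this as: a tree path inside $B$ from a left to a right endpoint must use an edge of $X^+_{ab}$). The difference is purely organizational—you give a direct structural description of every component of $T\setminus X^+_{ab}$ via the edge classification into $T_B$-edges, bridges, and external edges, while the paper argues by contradiction on pairs of endpoints—so this is essentially the same proof.
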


% \todo[inline]{These statements are observations needed in the proof of Lemma 6. They probably do not have to be a separate lemma} 
% % \todo[inline]{Rewrite/copy the proof of Lemma 1 in main.tex here. 
% % Show the statement for chords first.}
%  \todo[inline]{Do we need the statement for walls in a later proof?  NO. If yes add the proof here (copy from 0main.tex). If not alter statement of the lemma}

\subparagraph{Walls.}
Walls are the edges on the boundary of a blob in a counterclockwise traversal.
Let \(b,c\) be a pair of points.
%The directed line though \(bc\) subdivides the plane into two parts. 
%We call the left part the \emph{inside} side and the right part the \emph{outside} side.
Then \(uu'\in X_{bc}\) is an \defn{entry edge} for \(\overrightarrow{bc}\) if it crosses
\(\overrightarrow{bc}\) from right to left (from outside the blob to inside)
%is on the outward side and \(u'\) is on the inward side.
%In the other case it is
and an \defn{exit edge} otherwise. %for \(bc\).
%\(\overrightarrow{bc}\) is a \emph{candidate wall} if \(X_{bc}\) contains at most one exit edge.
%\todo[inline]{maybe digons call for a differentiation of candidate entry and exit walls, however a candidate entry wall could acutally be an exit wall if there is an exit edge at $b$ or $c$ so this naming does not feel right...}
%\todo[inline]{Add validity constraint. If there is an exit edge then no path can connect tree in forest of \(T\setminus X^+_{bc}\) with inside to outside points. Classify points as inside/outside depending on the component. G: currently not needed.
%I think we don't even need the concept of candidate walls.}

\subparagraph{Triangles and Digons.}
Among the non-root triangles in to which a blob is decomposed from the lowest vertex, there is a unique wall, where the edge towards the root exits. If this wall is part of a triangle, it is an \emph{exit triangle} the other triangles are
LR-triangles (left-to-right), RL-triangles (right-to-left), depending
on the direction in which the root lies,
see \autoref{fig:exit}.
Points adjacent to vertices of the blob are assigned to the triangles or digons according to the extension of the triangles by the rays through \(a\) and the other two vertices.

\begin{figure}
    \centering
    \includegraphics[page=3]{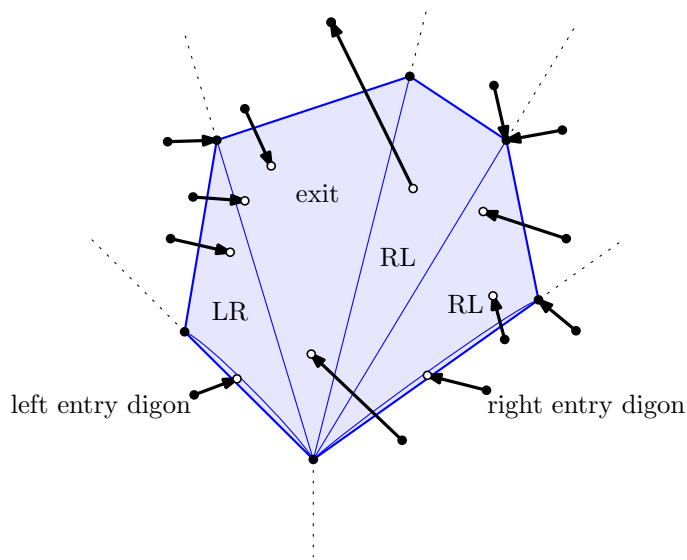}
    \caption{An example with one exit triangle, several LR- and RL-triangles, and two entry digons.}
    \label{fig:exit}
\end{figure}

For every % potential
triangle $abc$ with lowest point $a$,
we can, by analyzing the tree edges that cross the sides or are incident to the vertices $a,b,v$, classify it as a potential
LR-triangle, RL-triangle or exit triangle. A similar classification is obtained for digons.
The details are given in
\ifarXiv
\autoref{sec:classification}.
\else \cite[Appendix~D]{fullversion}.
\fi 
\todo[inline]{THE FOLLOWING REMARK MIGHT BE NICE IN APPROPRIATE PLACE: The trick of assigning $a$ to the right, if $a$ is the root leads to the fact that there are no root triangles and only root digons are needed. }
\iflong

\fi
The following lemma shows that this concept of valid triangles and digons agrees with the intended meaning:
 
\begin{lemma}
\label{lem:exit}
Let \(B\) be a blob in an optimal solution. Then, % the following holds:  
    \begin{enumerate}[(a)]
        \item There is exactly one exit triangle or digon.
        \todo
        [inline]{or root digon. Should we ARTIFICIALLY declare the root digon to be the exit digon?\\
        K:Currently we are inconsistent with this, see todos in the appendix}
        \item Every other triangle and digon in the decomposition of the blob from the lowest point is a valid LR-triangle, RL-triangle or a valid entry digon.
    \end{enumerate}
\end{lemma}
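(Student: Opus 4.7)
The plan is to combine the structural results already available: \autoref{lem:struct-MST} (every blob is the convex hull of a subtree of $T$), \autoref{coro:connected-inside-blob} (the induced subtree is connected and no other MST edge re-enters the blob), \autoref{lem:valid} (all chords are valid), and the fact from \autoref{lem:convex} that contracting blobs turns the tree-edges into a tree in which each non-root blob has exactly one parent pointer.

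For part~(a), suppose first that $B$ does not contain $r$. The contracted tree has a unique edge leaving $B$ towards~$r$, corresponding to a unique MST exit edge $uu'$ with $u\in B$ and $u'\notin B$. Either $u$ is a hull vertex and $uu'$ emanates from it, or $u$ is an interior point of $B$ and, by the convexity of $B$, $uu'$ crosses the boundary in exactly one wall. In both cases the exit is associated with a unique wall of $B$, and that wall lies in exactly one triangle or digon of the bottom-vertex decomposition; this is the exit triangle or digon. No other triangle or digon could qualify, since that would require a second MST edge leaving $B$ and thus a second parent pointer for $B$. If $B$ contains $r$, the convention introduced just before \autoref{lem:valid} declares the root digon to be the exit digon, and the same uniqueness argument rules out any further exit elsewhere.

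For part~(b), let $\Delta$ be a non-exit triangle or digon with wall $w$. If $\Delta$ is a digon, then by~(a) no MST edge incident to~$w$ is an exit edge, so every edge of $X^+_w$ that is not otherwise accounted for points from outside $B$ into $B$, which is precisely the definition of a valid entry digon. If $\Delta$ is a triangle with chords $ab,ac$ (lowest vertex~$a$), \autoref{lem:valid} gives the validity of both chords, so each cleanly partitions the remaining points of $P$ into a forward and a backward side. The root lies on the forward side of exactly one chord, and labelling that chord the forward chord and the other the backward chord classifies $\Delta$ unambiguously as an LR- or RL-triangle according to which of $ab,ac$ plays the forward role.

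The main obstacle is to check that the combinatorial conditions from the appendix (on how MST edges around $a,b,c$ must be labelled L/R for $\Delta$ to qualify as a valid LR- or RL-triangle) are automatically met. This is where \autoref{coro:connected-inside-blob} does the work: since no MST edge with both endpoints outside $B$ intersects $B$, there are no stray edges that could spoil the left/right classification around the vertices of $\Delta$; and since the points inside $B$ induce a connected subtree of $T$, the components of $T\setminus X^+$ across each chord correspond exactly to the subtrees that dynamic programming needs to transport from the backward to the forward chord. These two facts together promote the local structural classification of $\Delta$ into a valid one in the sense defined in the appendix, completing the argument.
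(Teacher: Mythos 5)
Your part~(a) essentially reproduces the paper's argument: the uniqueness of the MST exit edge follows from the connectivity of $B$ in $T$ (\autoref{coro:connected-inside-blob}(b)) together with the orientation of all MST edges towards the root, and the wall/sector assignment then identifies a unique exit triangle or digon. That part is fine, modulo a small slip (an exit edge emanating from a hull vertex does not cross a wall; it is assigned to a triangle or digon by the sector extension, not ``associated with a unique wall'').

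Part~(b), however, has a genuine gap. You write that ``the root lies on the forward side of exactly one chord'' and that labelling that chord forward classifies the triangle --- but by definition the forward side of \emph{every} valid chord is the side assigned to the root, so the root is on the forward side of both chords of any triangle. What actually has to be proved is that for every non-exit triangle the two chords $ab$ and $ac$ face the \emph{same} way (both right-facing, giving an LR-triangle, or both left-facing, giving an RL-triangle), and more globally that all chords on one side of the exit element are right-facing and all chords on the other side are left-facing; without this, a non-exit triangle could have one left-facing and one right-facing chord and would fit none of the three valid types, even though all its chords are valid by \autoref{lem:valid}. The paper establishes exactly this: for each chord $ab$ it takes the tree path from $b$ to the root, uses \autoref{coro:connected-inside-blob}(b) and part~(a) to argue that this path must leave $B$ through the unique exit edge $uu'$, and concludes that the forward side of $ab$ is the side facing $u$ (or $u'$, for chords crossed or touched by $uu'$), which agrees with the geometric direction towards the exit triangle. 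Your closing paragraph asserts that \autoref{coro:connected-inside-blob} ``does the work'' but never carries out this argument; the same omission affects your digon case, where you identify ``crossing from outside $B$ into $B$'' with ``crossing from the backward to the forward side'' --- that identification is precisely the orientation statement that needs proof.
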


\section{The algorithm}

The algorithm does not actually check that the blobs that we form are convex.
The only property that we implicitly enforce is that each blob is star-shaped around
the lowest vertex $a$.
\iflong
Thus, the algorithm 
%could in principle produce
considers also nonconvex blobs for potential solutions.
However, we know from \autoref{lem:convex} that such solutions cannot be optimal.
\else
 We know from \autoref{lem:convex} that nonconvex solutions cannot be optimal.
\fi
%\subsection{Types of subproblems}
As mentioned, we consider three type of subproblems.

\subparagraph{{Edge} problems.}
For each MST edge \(uu'\), we define a problem $\texttt{edge}[u]$, for the
optimum solution in which $uu'$ is used as a tree-edge.
In addition, for the root $r$, we have
$\texttt{edge}[r]$ for the overall problem.
We define \(V_u\) as the set of points in the subtree of  \(T\) rooted at \(u\).
\todo{Maybe $T_u$ instead of $V_u$?}
We define the \emph{size} of the problem as the cardinality of $V_u$.

\subparagraph{Chord problems.}
For each valid chord $ab$, we define a problem
\(\texttt{chord}[a,b]\) for the optimum solution on the backward side of $ab$,
supposing that $ab$ occurs as a chord of a blob in the solution.
We define \(V_{ab}\) as the set of all points that are in a backward component of \(T\setminus X^+_{ab}\).
We define the \emph{size} of the problem \(\texttt{chord}[a,b]\) as the cardinality of $V_{ab}$.
\iflong
For clarity, we will sometimes write  $V_{ab}$
as $\overleftarrow V\!_{\!ab}$ for a right-facing chord and
\todo{Is this really a good notation idea?}
as $\overrightarrow V\!_{\!ab}$ for a left-facing chord.
\fi

\paragraph{Wall problems.}
We denote by \(W_{abc}\) the set of endpoint of entry edges for \(abc\) that lie outside of the blob.
We will frequently need the values
$\sum_{u\in W_{abc}} \texttt{edge}[u]$.
We split this %sum
into
\begin{displaymath}
\sum_{u\in W_{abc}} \texttt{edge}[u] =
\sum_{u\in W_{bc}} \texttt{edge}[u]+ 
\sum_{u\in W_{abc}, \ uu' \text{ entry edge into \(b\) or \(c\)}}\texttt{edge}[u]
\end{displaymath}
To speed up the evaluation of this term, the first sum is stored as the solution
of an auxiliary subproblem, $\texttt{wall}[bc]$;
the second sum can be
computed in constant time,
\iflong
because
there are only a constant number of entry edges incident to $b$ and $c$, 
\fi 
because the
degree of a Euclidean MST is bounded by~6.
    
%Wall problems are auxiliary problems that accumulate intermediate results from other subproblems.
%
\iflong
For each pair of points $b,c$,
we denote by \(W_{bc}\) the set of points \(w\) where \(ww' \in X_{bc}\) is an entry edge for \(bc\).
%We define
%\(\texttt{wall}[b,c] = \sum_{w\in W_{bc}} \texttt{edge}[w]\).
The size of this problem is defined as the sum of the sizes of the problems $\texttt{edge}[u]$. %Note that
If $bc$ is a wall in an optimal solution, the subtrees rooted at the nodes $w\in W_{bc}$ are disjoint.
However, we don't check this condition.
Hence the size can be bigger than $n$.
\fi

\paragraph{Relation between subproblems.}
In
\ifarXiv
\autoref{sec:subproblem_structure_digon},
\else \cite[Appendix~G, Lemmas 9 and~10]{fullversion},
\fi 
we describe the relations between the sets of points that define the
subproblems\ifarXiv
\ (Lemmas~\ref{lem:subproblem_structure_triangle} and~\ref{lem:subproblem_structure_digon})\fi.
These lemmas ensure that the solution
of every subproblem depends only on problems of smaller size.

\paragraph{Preprocessing.}
In a preprocessing phase, we determine the size of each subproblem,
in order to know the order in which we have to solve the subproblems.
%In the main dynamic-programming phase, we process the subproblems in order
%of increasing size. This ensures that the subproblem solutions
%that are required to solve a particular problem are %already 
%available
%when they are needed.
In
\ifarXiv
\autoref{sec:preprocessing},
\else \cite[Appendix~H, Lemma~11]{fullversion},
\fi 
we give details about
the extra information that is gathered in the preprocessing step\ifarXiv
\ (\autoref{lem:preprocessing})\fi.

%\paragraph{Wall problems.}

%This directly defines two types of subproblems \texttt{chord}$[a,b]$ is an optimal solution for \(V_{ab}\) where \(ab\) is a valid chord or entry digon of the solution and \texttt{edge}$[u]$ is an optimal solution for $V_u$ if the edge \(uu'\) is part of the solution.
%Additionally we consider the subproblems \texttt{wall}$[b,c]$ that is an optimal solution for \(\cup{w\in W_{bc}} V_w\).
%\todo{add wall problem}

\subparagraph{{Chord} problems.}
For each valid chord $ab$, we determine its sidedness.
If it is right-facing, we consider all possibilities for
an LR-triangle $abc$, as well as the possibility
that $ab$ is a left entering digon.
Left-facing chords are analogous. More details are in
\ifarXiv
\autoref{sec:solution}.
\else \cite[Appendix~I]{fullversion}.
\fi 

\subparagraph{{Edge} problems.}
For each MST edge \(uu'\), we have two possibilities.
If $u$ is not in a blob, all incoming MST edges of $u$ must
be tree-edges, and we can accumulate the values of the corresponding
subproblems.
If $u$ is in a blob, we consider all potential exit triangles $abc$ for which the $uu'$ is the exiting edge crossing the wall $bc$,
as well as the analogous possibility of an exit digon.
Details are given in
\ifarXiv
\autoref{sec:solution}.
\else \cite[Appendix~I]{fullversion}.
\fi

\begin{theorem}
\label{thm:algo}
    The dynamic program \iflong above correctly \fi solves the minimum blob-tree problem in \(O(n^3)\) time.
\end{theorem}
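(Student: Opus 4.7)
The plan is to split the argument into a correctness part and a running-time part, treating the three families of subproblems (\texttt{edge}, \texttt{chord}, \texttt{wall}) uniformly through the structural results of \autoref{lem:struct-MST}, \autoref{coro:connected-inside-blob}, and \autoref{lem:exit}.

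For correctness, I would first argue that each subproblem indeed captures what it is supposed to capture. The restriction of any optimal blob-tree to the points $V_u$, respectively $V_{ab}$, yields a blob-tree on those points by \autoref{lem:struct-MST} and \autoref{coro:connected-inside-blob}(a)--(b): no blob or tree-edge straddles the ``boundary'' of a subtree of the MST, and the only way an edge of $T$ enters a blob from outside is as an entry edge, which is already accounted for by the definition of $W_{abc}$. Hence each subproblem value equals the cost of an optimal partial blob-tree on its domain. The main step is then to verify that each recursion exhausts all ways the optimal structure inside that domain can look. For an \texttt{edge} problem, either $u$ is a ``tree vertex'' (no blob at $u$), in which case the children of $u$ in $T$ are independent \texttt{edge} subproblems, or $u$ belongs to a blob, in which case \autoref{lem:exit} guarantees the existence of a unique exit triangle or digon containing the edge $uu'$ and we enumerate over all choices of its third vertex. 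For a right-facing \texttt{chord} problem $ab$, the blob containing $ab$ must, on the backward side of $ab$, either close off immediately (entry digon case) or be extended by exactly one LR-triangle $abc$; the content behind $bc$ and behind $ac$ is then an independent \texttt{chord} or \texttt{wall} subproblem, and the points between the two chords are handled by $\texttt{wall}[bc]$ plus entry edges incident to $b,c$ as in the formula displayed in the paper. Left-facing chords are symmetric.

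The only non-trivial step in correctness is verifying that the DP never overcounts or produces an infeasible configuration: because the algorithm does not explicitly check convexity or even that the generated polygon is simple, one has to appeal to \autoref{lem:convex}, which says that any non-convex or non-disjoint configuration is strictly dominated by a convex, disjoint one that the algorithm does consider. Combined with \autoref{lem:valid} and the classification of triangles/digons established in the referenced appendix, this shows that the minimum value produced by the DP is attained by a genuine feasible blob-tree and is no larger than any optimum, hence equals the optimum. The recursion is well-founded because, by the size relations summarized around \autoref{lem:subproblem_structure_triangle}, every right-hand side of a recurrence refers only to strictly smaller subproblems.

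For the running time, there are $O(n)$ \texttt{edge} subproblems and $O(n^2)$ \texttt{chord} and \texttt{wall} subproblems. Each \texttt{edge} subproblem considers $O(n^2)$ pairs $(b,c)$ for a potential exit triangle or $O(n)$ partners for an exit digon, but a tighter accounting suffices: the work attributable to each \texttt{edge} problem can be amortized to $O(n^2)$, and each \texttt{chord} problem spends $O(n)$ time enumerating the next vertex $c$ of the triangulation, each \texttt{wall} problem spends $O(1)$ amortized time since Euclidean MST degree is bounded by $6$. Summing gives $O(n^2)\cdot O(n) = O(n^3)$, matching the claimed bound; the preprocessing described earlier also runs in $O(n^3)$. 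The hard part of this write-up will be the careful bookkeeping that each exit-triangle enumeration at an \texttt{edge} problem can be charged to an \texttt{edge}--\texttt{chord}--\texttt{wall} triple in a way that sums to $O(n^3)$ rather than the naive $O(n^4)$; this is where the uniqueness of the exit triangle from \autoref{lem:exit} is used to ensure the enumeration ranges over a structurally restricted set.
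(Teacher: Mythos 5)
Your proposal is correct and follows essentially the same route as the paper: correctness comes from the subproblem-structure lemmas (Lemmas~\ref{lem:subproblem_structure_triangle} and~\ref{lem:subproblem_structure_digon}) guaranteeing a well-founded order by size, with \autoref{lem:convex}, \autoref{lem:valid} and \autoref{lem:exit} justifying the case analysis, and the $O(n^3)$ bound comes from exactly the charging argument you identify --- each wall has a unique exit edge, so the exit-triangle enumeration over all \texttt{edge} problems is charged to $O(n^2)$ walls times $O(n)$ apices $a$. The only small imprecision is your claim that each \texttt{wall} problem takes $O(1)$ amortized time; the paper charges $O(n)$ per wall problem (summing \texttt{edge} values over $W_{bc}$, which can have linear size), which still totals $O(n^3)$.
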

\begin{proof}
For each size, we first solve the \texttt{edge} problems, then the \texttt{wall} problems, and then the \texttt{chord} problems.
\ifarXiv
    Lemmas~\ref{lem:subproblem_structure_triangle} and~\ref{lem:subproblem_structure_digon}
    in
\autoref{sec:subproblem_structure_digon}
\else
Lemmas 9 and~10 in
\cite[Appendix~G]{fullversion}
\fi 
    show that every subproblem
 solution that is needed is already computed,
 and correctness follows from these lemmas.
%    The correctness for the recurrence follows directly from  \autoref{lem:subproblem_structure_triangle} and \autoref{lem:subproblem_structure_digon}.
    % The correctness for the recurrence for the \texttt{chord} sub problem follows directly from \autoref{lem:subproblem_structure_triangle} and \autoref{lem:subproblem_structure_digon}.
    % For the \texttt{edge} sub problem, there are a few more cases to consider. 
    % The first case is that the vertex \(u\) lies outside all blobs. In this case, the solution for \(\texttt{edge}[u]\) consist of the sum of the solutions for the subproblems of the children.
    % The second case is that \(u\) is part of a blob. Then \(u u'\) can only be a tree edge of the solution if it is an exit edge. 
    % Then the equations dealing with exit triangles and digons in \autoref{lem:subproblem_structure_triangle} and \autoref{lem:subproblem_structure_digon} directly imply that the minimum of the last two lines of the recurrence for \(\texttt{edge}[u]\) gives the correct solution.

The straightforward running time analysis is given
in
\ifarXiv
\autoref{sec:runtime}.
\else \cite[Appendix~J]{fullversion}.
\fi
\end{proof}

%\todo[inline]{Describe the algorithm and argue its run time and correctness}
%\section{Conclusion}
%nothing
%\bigskip

%\noindent
%\iflong
{\bf Acknowledgments.} We thank
Bettina Speckmann for co-proposing this problem, and Philipp Kindermann for hosting GG Week 2024 in Trier, where this research started. 
%We thank the organizers for the tasty cookies.

\bibliographystyle{plainurl}
\bibliography{blobtrees,enclosing}

\ifarXiv\else
\end{document}
\fi

\appendix
%\section{Proofs}
\section{Proof of \autoref{lem:struct-MST}}
\label{sec:proof:struct-MST}
\begin{proof}
First, consider an MST-edge \(e=u'u''%_1u_2
\) that is not a tree-edge or inside a blob in a solution~$S$.
We show how to construct a better solution $\hat S$, and therefore $S$ cannot be optimal.
The removal of \(e\) splits \(T\) into two subtrees \(T'\) and \(T''\)
with $u'\in T'$ and $u''\in T''$.
There has to be a connection between \(u'\) and \(u''\) in~$S$, %, as $S$ is connected.
%This connection 
consisting of an alternating sequence of paths and blobs,
see \autoref{fig:struct-MST} for an illustration. % of the following argument.
More precisely,
there are $k\ge1$ paths $u'=v_0\to w_0$, $v_1\to w_1$, \ldots, $v_{k}\to w_k=u''$ consisting
of tree-edges, such that $w_i$ and $v_{i+1}$ are in a common blob.
It can happen that the first or last path is trivial, i.\,e., $u'=w_0$ or $v_k=u''$, but there must be at least one non-trivial path, because otherwise $u'$ and $u''$ are in the same blob,
contrary to our assumption.

At some point in this sequence, there must be a switch from $T'$ to $T''$. If this switch
occurs at an edge $\bar e$ of one of the paths, we can exchange $\bar e$ by $e$ and reduce
the cost.
Suppose that the switch occurs in one of the blobs, say between
 $w_i\in T'$ and $v_{i+1}\in T''$. Then we know that the distance between $w_i$ and $v_{i+1}$
 is larger than the length of $e$.
In this case, we obtain the shorter solution $\hat S$ by forming a single blob
from the convex hull of all paths and blobs on the way between $u'$ and $u''$.
To see that this is shorter than $S$, we first form the outer contour $K$ of all paths and blobs
together with the edge $e$. We have to pay the length of $e$, but we save the parts of the blob perimeter that lie inside the contour.
In particular, consider the last edge
$xw_i$ of the path
$v_i\to w_i$ and the first edge  $v_{i+1}y$ of the path $v_{i+1}\to w_{i+1}$.
Let $\bar x$ and $\bar y$ be the intersections of the edges 
 $xw_i$ and $v_{i+1}y'$ with the boundary of $B$. (It is possible that
 $\bar x=w_i$ or $\bar y=v_{i+1}$.)
When comparing the blob %s and edges that were replaced
with the outer contour $K$, 
%we have to add
%the length of $e$, but
we save
at least the parts $w_i\bar x$ and $\bar yw_{i+1}$ of the edges plus
a part of the boundary of $B$ from  $\bar x$ to $\bar y$,
forming a curve connecting $w_i$ with $v_{i+1}$
(highlighted in red).
This curve is at least as long as $\lVert w_iv_{i+1} \rVert$, which is longer than~$e$.
Taking the convex hull can only make the curve even shorter.
Thus, $S$ cannot be optimal.
See \autoref{fig:struct-MST} for an illustration. 

%If \(T_1\) and \(T_2\) are connected by a tree-edge \(e'\) in \(S\) then, by the properties of an MST, \(e'\) must be longer than \(e\). Thus, replacing \(e'\) by \(e\) gives a shorter solution.
%(Here we use the general position assumption that all pairwise distances are distinct.)

\begin{figure}
    \centering
    \includegraphics[page=2]{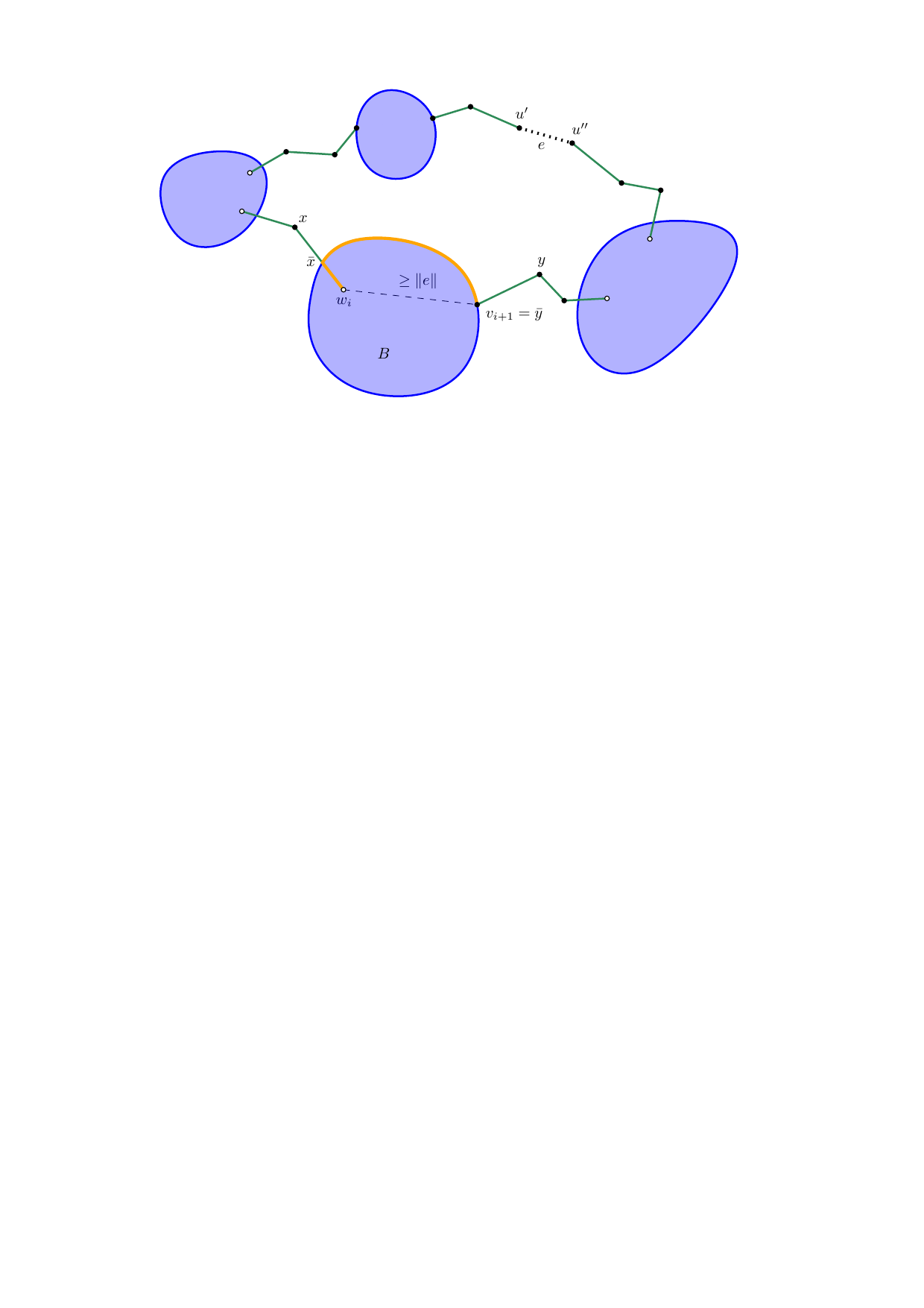}
    \caption{The situation for \autoref{lem:struct-MST}. Adding $e$ and then
    taking the orange region (that is the convex hull of %convexifying 
    the union of the blue blobs and green edges) %yellow region
    as a single blob
    gives a shorter solution, as the red dotted path between $x$ and $y$ is now inside the blob and does not count towards the length of the solution.}
    \label{fig:struct-MST}
\end{figure}

%In the other case, \(T_1\) and \(T_2\) are connected in \(S\) by some blob $B$, such that the MST-edges inside $B$ are disconnected.
% of the following argument.

%There exist two MST-edges $xx'\in T_1$ and $yy'\in T_2$ %in the blob, 
%that each 
%have one point
%%inside the blob, and one point on the boundary or outside of the blob.
%(\(x\) and \(y\)) inside or on the boundary of the blob and the other point outside the blob.
%%, and the points inside the blob (\(x\) and \(y\)) are such that \(x\in T_1\) and \(y\in T_2\).
%Then \(|xy|>%\geq
%|e|\) as otherwise \(xy\) would be an edge of \(T\).
%Adding \(e\) to \(S\) closes a cyclic chain $C_e$ of tree-edges and blobs,
%see  \autoref{fig:struct-MST}.
%Take the outer contour $K$ of all blobs and tree-edges of $C_e$, including the blob $B$.
%We take the convex hull of $K$ and form a new blob $B'$.
%The new solution $S'$ is obtained by replacing the edges and blobs of $C_e$
%by $B'$.
%
%Let $\bar x$ and $\bar y$ be the intersections of the edges 
% $xx'$ and $yy'$ with the boundary of $B$. (It is possible that
% $\bar x=x$ or $\bar y=y$.)
%When comparing the blobs and edges that were replaced
%with the outer contour $K$, we have to add
%the length of $e$, but
%we save
%at least the parts $x\bar x$ and $y\bar y$ of the edges plus
%a part of the boundary of $B$ from  $\bar x$ to $\bar y$,
%forming a curve connecting $x$ with $y$
%(highlighted in orange).
%This curve is at least as long as $\lVert xy \rVert$, which is longer than~$e$.
%Taking the convex hull can only make the curve even shorter.
%Thus, $S$ cannot be optimal.

So far we showed that in an optimal solution, any MST edge is either a tree-edge or it is contained in a blob. It remains to show that any blob in an optimal solution is a convex hull of a subtree of the MST (as opposed to being a convex hull of a set of points that is disconnected in the MST).

For contradiction, suppose there exist two points $u, v$ that belong to the same blob $B$ and the path from $u$ to $v$ in the MST does not lie inside $B$.
Let $x\bar x$ be the MST edge on this path that leaves $B$ for the first time.
If we remove this edge from $S$, $\bar x$ is still connected to $x$:
First, there is the MST path
from $\bar x$ to $v$, and every edge on this path is either a tree-edge, or its
endpoints are in the same blob. Secondly, $v$ is in the same blob as $x$
%, and thus
% $\bar x$ and $x$ are still connected in~$S$. 
Therefore the solution $S$ was not optimal.
%
%Then the path leaves $B$ and comes back to it, so we can remove one of its edges while preserving connectivity, contradicting optimality.
%\todo[inline]{G: the argument is quite short. remove one of its edges from WHAT?}
%that has a part of the boundary of the blob on the inside.
%[ FOLLOW THE OUTER CONTOUR ... ]
%The path from \(x\) to \(y\) along the boundary is at least as long as \(|xy|\). Thus adding \(e\) to the solution, to form a new blob that contains the points of the current  blob and all points inside the cycle gives a shorter solution.
%[ ... HOW MUCH WE SAVE .. (the path highlighted in orange) ... ]
\end{proof}

\section{Proof of \autoref{lem:enpoints_r_or_l}}
\begin{proof}
    A line segment \(xy\) could intersect the boundary separating the
    left and right area twice. However, as the angle between % the
    % directed ray
    \(ab\) and the vertical downward ray from \(a\) is obtuse, the lens defined by the disks with radius \(\Vert xy\Vert\) centered at \(x\) and \(y\) always contains \(a\), implying that \(xy\) is not an edge of the MST and thus not contained in \(X_{ab}\). See \autoref{fig:left-right}b.
\end{proof}

\section{Proof of \autoref{lem:valid}} 
\begin{figure}
\centering
    \includegraphics[page=3]{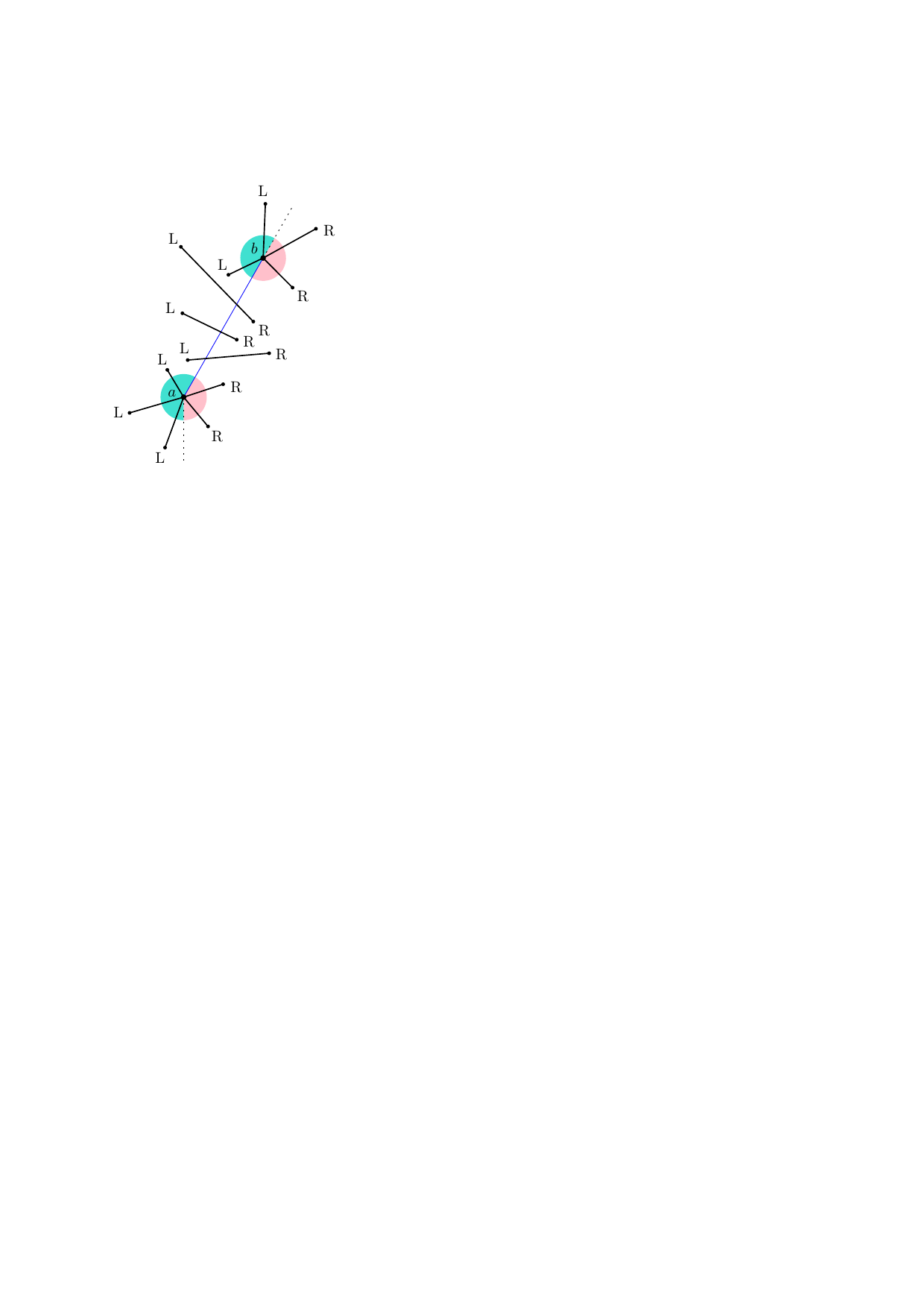}%
    \caption{A few endpoints of edges $X_{ab}$ for a chord $ab$ of a %hypothetical
    blob~$B$.
    }
    \label{fig:left-right2}
\end{figure}
 \begin{proof}
Suppose $ab$ is a chord of some blob $B$ in an optimal solution.
%When we classify an endpoint $u$ as L or R, we do not know whether 
%$u$ lies in $B$ (except if $u$ lies below $a$; then we know that it cannot lie in~$B$). Nevertheless, we must show that two endpoints with different labels
%are not connected without going through an edge of $X_{ab}$.
First observe that, by \autoref{coro:connected-inside-blob} (c) for any edge $uu'\in X_{ab}$, at least one of $u$ and $u'$ must lie in $B$. %[Proof: Because a tree edge cannot cut all the way through a blob! I remember this very vividly from long time ago!!]
%\todo{this is \autoref{coro:connected-inside-blob} c)}

1)
If an endpoint $u$ of an edge $uu'\in X_{ab}^+$ lies outside $B$ (such as
$u_1$, $u_3$, or $u_7$
in Figure~\ref{fig:left-right2}),
it cannot be connected to any other endpoint $v$ of an edge $v\bar v\in X_{ab}^+$, no matter how the two points are 
labeled. % /classified.
The reason is that the partner $u'$ of $u$ is in $B$, and either $v$ or its partner $\bar v$ is in $B$. So we would have an MST path that starts in $B$ (at~$u'$), goes out of $B$ (at $u$) and comes back into $B$ (ending in $v$ or $\bar v$), contradicting \autoref{coro:connected-inside-blob} (a).

2)
If a left endpoint $u$ and a right endpoint $v$ of two edges in $X^+_{ab}$ lie both
in $B$, (such as $u_2$ and $u_5$
in Figure~\ref{fig:left-right2}),
they cannot be connected in $T\backslash X^+_{ab}$.
The reason is that the MST must connect $u$ and $v$ within $B$,
by \autoref{coro:connected-inside-blob} (b),
and this is impossible without cutting the chord $ab$ or going through $a$ or $b$.
 \end{proof}
 
\section{Classification of triangles and digons}
\label{sec:classification}
Let \(abc\) be a counterclockwise triangle with lowest point \(a\) and valid chords \(ab\), \(ac\).
Note that \(a\) lies on the left %inward
side of \(\overline{bc}\).
Let \(O_{abc}\) be the region defined by the intersection of the right %outside
part of \(\overrightarrow{bc}\) with the wedge defined by the rays \(ab\) and \(ac\), see the orange area in \autoref{fig:classify_sides}.
The \defn{entry edges} for \(abc\) are the entry edges for $bc$ plus the MST edges
$uu'$ with $u'\in\{b,c\}$ and \(u\in O_{abc}\).
The \defn{exit edges} for \(abc\) are the exit edges for $bc$ plus the MST edges
$uu'$ with $u\in\{b,c\}$ and \(u'\in O_{abc}\).
%An edge  \(uu' \in (X^+_{bc} \setminus X_{bc})\) is an if \(u\) is in \(O_{abc}\)  and an \defn{exit edge} for \(abc\) if \(u'\) is in \(I_{abc}\).

We call \(abc\) an \defn{LR-triangle}
(for ``left-to-right'' triangle), if \(ab\) and \(ac\) are right-facing and there are no exit edges for \(abc\). % in \(X^+_{bc}\).
%[GÜNTER: This def. is dangerous! the side $bc$ may be vertical. Or it
%may be overtilted so that above and below swap roles. better work with
%oriented segment! or the side containing $a$. Maybe $bc$ has different criteria
%for "valid (oriented) SIDE" (or "permissible" segment, to make the distinction clearer)][Adapted it]
Analogously, \(abc\) is an \defn{RL-triangle} if \(ab\) and \(ac\) are left-facing chords
and there are no exits edges for \(abc\).
% \todo[inline]{K: It might help with the cases and descriptions later if we defined \emph{forward} and \emph{backward} edges of valid LR- and RL-triangles to minimize the number of cases in describing the algorithm and showing its correctness.}
The triangle \(abc\) is an \defn{exit-triangle}, if \(ab\) is left facing, \(ac\) is right-facing, and in addition, there is exactly one exit edge for \(abc\).
%\todo{G: u=b possible?}
%[Same Problem as above, has to use directed line segment $bc$][If we are happy with the definition above, also adapt it here]
%This unique edge is called the \defn{exit edge}.
LR-triangles, RL-triangles, and exit-triangles are \defn{valid triangles.} %In all other cases the triangle is invalid.
For a valid triangle, let \(X^+_{abc}=X^+_{ab} \cup X^+_{ac} \cup X^+_{bc}\).

%\todo[inline]{It would be great if someone could double check the definition above to make sure that I did not miss any cases.
%G. Rewritten. Pls recheck. So far I have not yet touched the digons at all.
%K: looks good to me}

%\textbf{Günter:\it Perhaps we should switch to the convention that for $V_{st}$, the forward side is always to the left of the directed segment $st$? Instead of always writing $ab$ with $a$ the lower point.}

%\todo[inline]{G. another attempt for digons:}

\begin{figure}
    \centering
  \includegraphics[page=5,scale=1]{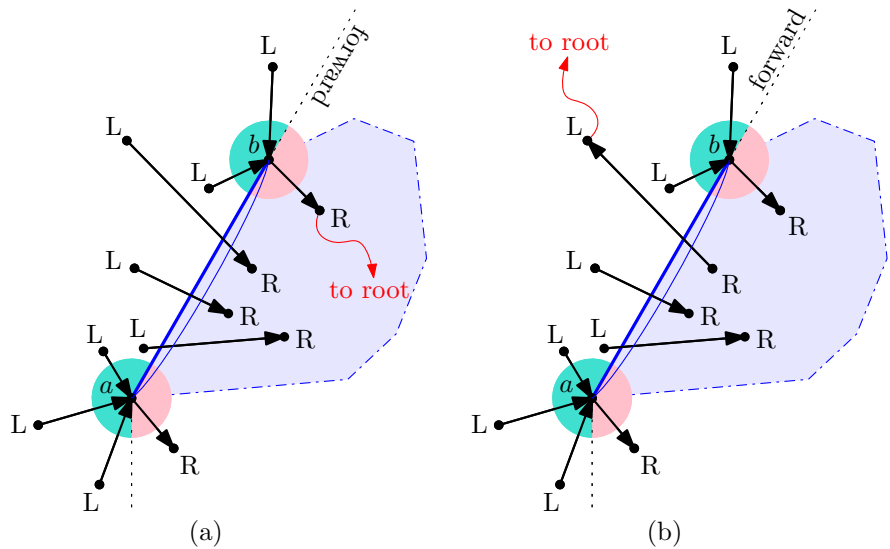}%

    \caption{A left entry digon (a) and a left exit digon (b), together with
    the outline of a % hypothetical
    blob to which these digons might belong.
    (In (b), the edge from $a$ to the point marked R shows
    that this digon can never be part of a valid blob.
    However,
    this is not tested when considering the segment $ab$:
    This edge will preclude the existence
    of a matching right entry digon.)
    }
    \label{fig:classify_chords}
\end{figure}

Digons are more tricky to define, because they must simultaneously
fulfill the conditions of valid chords and entry or exit walls.
A valid chord
$ab$ with $a\ne r$ is a \defn{valid entry digon} if the
directed edges $uu' \in X^+_{ab}$ satisfy
the following condition:
For every edge 
\(uu'\) in \(X^+_{ab}\), $u=a$ or $u$ is on the backward side,
and $u'=b$ or  \(u'\) is on the forward side.
It is a \defn{left entry digon} if \(ab\) is right-facing
(see \autoref{fig:classify_chords})
and a \defn{right entry digon} if it is left-facing.

If there is exactly one edge
\(uu'\in X^+_{ab}\)
for which the above condition is fulfilled, then $ab$
is a \defn{valid exit digon}.
It is a \defn{right exit digon} if it is right-facing
and a \defn{left exit digon} if it is left-facing
(see \autoref{fig:classify_chords}).
(If \(|X^+_{ab}|=1\) then it is simultaneously an entry digon and an
exit digon, but of opposite sides.)

%If \(ab\) with \(a\neq r\) is a valid chord and all edges \(uu'\) in \(X^+_{ab}\) have \(u'\) on the forward side, then \(ab\) is a valid \defn{left entry digon} if \(ab\) is right-facing, and a valid \defn{right entry digon} if \(ab\) is left-facing.
%All edges in \(X^+_{ab}\) are entry edges for \(ab\).
%[G: is "valid" part of the definiton?]
%Similarly, if there is exactly one edge \(uu'\) in \(X^+_{ab}\) with \(u'\) in the forward component [G: vice versa?\footnote{the fw component is on the
%side on which the root lies.}] component, then \(ab\) is a valid \emph{right exit digon}, if \(ab\) is right-facing, and a valid \emph{left exit digon}, if \(ab\) is left-facing.
%\todo[inline]{K: It would be great if someone would double check the digon definitions}

The case $a=r$ is special:
%By definition, there are no %valid
%left-facing chords \(rb\).
By definition, all chords \(rb\) are right-facing.
Any %right-facing
chord \(rb\) for which all edges in $X_{rb}$
cross
%that only has edges going 
from the right to the left side is a \emph{valid root digon}.
In order to handle root digons consistent with other digons, we assume that a root digon is a right exit digon with a virtual exit edge \(rr'\) with \(\Vert rr' \Vert =0\).
\todo[inline]{K: 
I added the part about the root digon being considered a right exit digon. All other proofs should now work}
%For ease of exposition, we will consider root digons to be right exit digons.

\section{Proof of \autoref{lem:exit}}

\begin{proof}
\begin{enumerate}[(a)]
\item 
We look at the exit edges from $B$, i.e., the MST edges $uu'$
with $u'$ not in~$B$
that cross a wall of $B$ or that emanate from 
a vertex $u$ of $B$ into the direction outside~$B$.
We claim that $B$ has at most one exit edge, and it has one exit edge if and only if $B$ does not contain the root.

Suppose $B$ has two exit edges $uu'$ and $vv'$.
We know from \autoref{coro:connected-inside-blob} (b) that $u$ and $v$ are connected by a tree path within $B$, i.e., without going through $u'$ and $v'$. But this is inconsistent with the orientation of $uu'$ and $vv'$.
%%there is a unique tree path $\pi$ from $u$
%the unique shortest tree path between these edges involves $u'$ or $v'$ as endpoints.
%Say, it starts at $u'$ and ends at $v$ or $v'$. Then $\pi$ starts in $u$, goes out of $B$, and returns into $B$, contradicting \autoref{coro:connected-inside-blob}.X. [Didn't we have precisely the same argument already?] 
The same contradiction is obtained 
if $B$ contains the root and has an exit edge. 
Of course, if $B$ does not contain the root, there must be at least one exit edge.
Thus, the claim has been established.

If $B$ contains the root, it is the lowest point $a$ of the blob, by our assumption the right digon is a right exit digon and and none of
the other digon and triangles are exit digons or triangles. %\todo{No, it currently doesn't and as written above it might not be a good idea to do so}

Otherwise, %we have established that 
there is a unique
exit edge.
If this edge cuts a wall, this defines a unique exit triangle or digon.
If the exit edge emanates from a vertex of~$B$, the partition of the outgoing
directions into disjoint sectors (\autoref{fig:parition}) ensures that
our convention of assigning the exit tree edge to a triangle (\autoref{fig:classify_sides}) or digon
identifies exactly one triangle or digon as the exit triangle or digon.

\item We have already shown that all chords are valid (\autoref{lem:valid}).
We still need to show that all chords to the right of the exit triangle or digon are left-facing, and all chords to the left
of the exit triangle or digon are right-facing.
If the root is in $B$, then, by convention, all chords are right-facing, and this is the desired result.

Consider now the case that the root is not in $B$, and let $uu'$ be the exit edge. The point $u$ lies inside $B$.
If $u=a$, one can check directly that all chords are oriented correctly, using the edge $uu'\in X_{ab}$.

Consider the case $u\ne a$.
For each chord $ab$, it is sufficient for find \emph{one} edge of $X_{ab}$ whose endpoint is labeled correctly as the forward side.
All chords $ab$ that are crossed by $uu'$ (or touched by $uu'$, in case $u=b$) are oriented correctly, towards $u'$, because the path from $u'$ to the root does not
enter $B$ and therefore does not use edges of $X_{ab}$.

All remaining chords $ab$ are directed towards $u$,
which is seen as follows. Take the tree path $\pi$ from $b$ to the root.
By \autoref{coro:connected-inside-blob} (b), the points in $B$ are connected by tree edges, and hence the path must leave $B$ via the edge $uu'$. Thus, the last vertex of $X_{ab}$ that is visited by $\pi$ must be on the side facing~$u$.
For the chords $ab$ that are not intersected or touched by $uu'$, the direction towards $u$ agrees with the the direction
towards the exit triangle, and hence the proof is completed.
\end{enumerate}
\end{proof}

\section{Interaction between the chords of a valid triangle}
The following lemma gives additional structural insight on how the sides of valid triangles interact.

\begin{lemma} \label{lem:sides_valid_triangles}
    Let \(abc\) be an LR- or RL-triangle and let \(uu'\) be an entry edge of \(abc\). Then the following holds:
    \begin{enumerate}[(a)]
        \item  Every point \(p\) in a left component of \(ac\) lies in a left component of \(ab\). 
        Every point \(p\) in a right component of \(ab\) lies in a right component of \(ac\).
        \item   No point in the left component of \(ac\) or in the right component of \(ab\) lies in the same connected component of  \(T \setminus X^+_{abc}\) as \(u\).
    \end{enumerate}
\end{lemma}

%\section{Proof of \autoref{lem:sides_valid_triangles}}
%\label{proof:sides_valid_triangles}
\begin{proof}
% \begin{figure}
% \centering
% \includegraphics[scale=0.75,page=2]{subproblems_smaller.pdf}
%     \caption{Illustration of the LR-triangle case for \autoref{lem:sides_valid_triangles}}
%     \label{fig:sides_valid triangles}
%     [!This is a rough sketch of the figures, make them beautiful for the full version]
% \end{figure}
\begin{enumerate}[(a)]
    \item 
    \begin{figure}
\centering
\includegraphics[page=2]{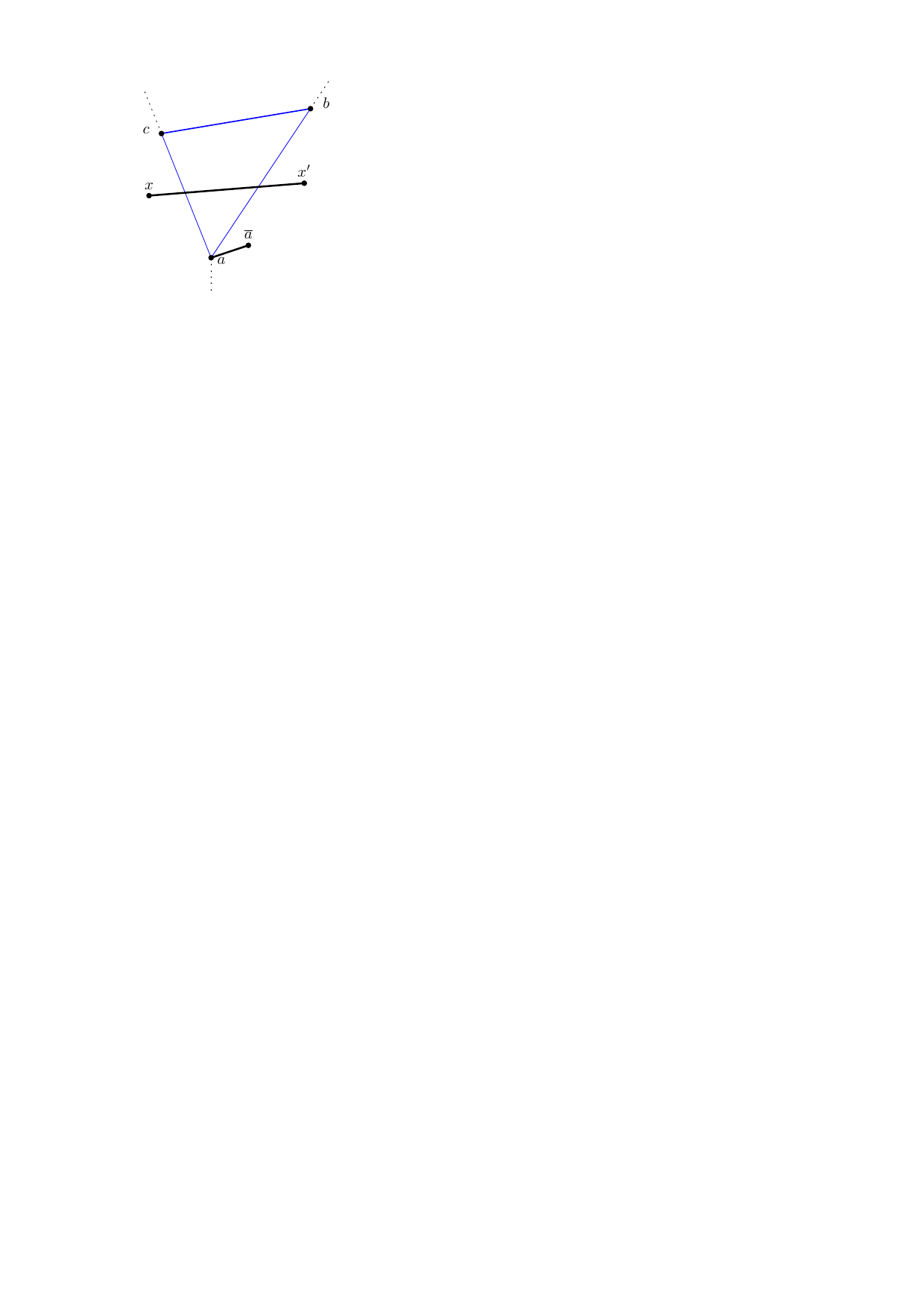}
\caption{ The path \(\pi_1\) connectes a right endpoint in \(X^+_{ab}\) with a left endpoint in \(X^+_{ac}\).}
\end{figure}
    We show the statement for the case of LR-triangles and left components.
    The remaining three cases are symmetric.
    Let \(abc\) be a LR-triangle. Assume that there is a point \(p\) that is in a left component with regard to \(ac\) but in a right component with regard to \(ab\).
    Let \(\pi\) be the path from \(p\) to the root.
    
    Consider the subpaths \(\pi\setminus (X^+_{ab}\cup X^+_{ac})\) of \(\pi\) and a subset \(\Pi=\pi_1,\dots \pi_k\) of these paths.
    \(\Pi\) contains the first path of \(\pi\setminus (X^+_{ab}\cup X^+_{ac})\) if it ends in an endpoint of \(X^+_{ac}\) and all paths that connect an end point in \(X^+_{ab}\) to an end point in \(X^+_{ac}\). % and the last path, if it starts with an endpoint of \(X^+_{ac}\).
    Note that \(\Pi \neq \emptyset \) as otherwise \(\pi\) would not use an edge in \(X^+_{ac}\).
    As \(ac\) is a right-facing chord, this would imply that \(p\) is in a right component of \(ac\).

    First note that if \(xx'\in X_{ab} \cap X_{ac}\), then \(x\) is a left end point of an edge in \(X_{ab}\) if and only if it is a left end point of an edge in \(X_{ac}\).
    Additionally, let \(e\) be an edge adjacent to \(a\) and \(\overline{a}\) the other endpoint of this edge.
    Then, if \(\overline{a}\) is a right endpoint with regard to \(ab\) it is also a right endpoint with regard to \(ac\).
    Symmetrically, if \(\overline{a}\) is a left endpoint with regard to \(ac\) it is also a left endpoint with regard to \(ab\).
    
    We claim, that all paths in \(\Pi\) contain vertices that are in a left component of \(ac\) and in a right component of \(ab\). Furthermore, each path in \(\Pi\) that does not start with \(p\) starts and ends with a vertex in \(X^+_{ab}\oplus X^+_{ac}\).
    We show this in an inductive fashion, starting with \(\pi_1\).
    If \(\pi_1\) starts with \(p\) the claim follows by the definition of \(p\).
    In the other case let \(\pi_1 = (v_1,\dots, v_\ell)\). %\todo{Should it be $\pi_i$ instead of $\pi_1$? K: No, this is basically the base case of the induction. But we could maybe tweak the naming? }
    Then \(v_1\) is an end point of an edge in \(X^+_{ab}\) and \(v_\ell\) is an endpoint of an edge in \(X^+_{ac}\).
    As the prefix of \(\pi\) that ends with \(v_\ell\) did not use any edges in \(X^+_{ac}\), \(v_\ell\) cannot be a right end point of an edge in \(X^+_{ac}\) since $p$ is in the left component of $ac$.    
    As there are no exit edges in \(abc\), the point \(v_1\) cannot be a left end point of \(X^+_{ab}\).
    This shows that all points on \(\pi_1\) are in a left component of \(ac\) and in a right component of \(ab\).
    By the observation above, they cannot be in \(X^+_{ab}\cap X^+_{ac}\).s
    This argument can be extended inductively to show the claim for all \(\pi_i\in \Pi\).

    The claim implies that  \(\pi_k\) ends with a left endpoint of an edge in \(X^+_{ac}\setminus X^+_{ab}\) or a right end point of \(X^+_{ab} \setminus X^+_{ac}\). 
    If \(\pi_k\) ends with a right end point \(x\) of an edge in \(X^+_{ab}\), then all points on \(\pi\) that are successors of \(x\), in particular \(r\) are in a left component of \(ac\). 
    This is a contradiction to the assumption that \(ac\) is a right facing chord.
    In the other case, \(\pi_k\) ends with a left end point of \(X^+_{ac}\) and the path does not use any edges of \(X^+_{ab}\) anymore. 
    Consider the suffix of \(\pi\) starting at the last point of \(\pi_k\).
    All points on this suffix are in a right component of \(ab\).
    Let \(yy'\) be the last edge of this suffix that uses an edge in \(X^+_{abc}\).
    This edge has to be a left-to-right edge of \(X^+_{ac}\), as \(ac\) is  a right-facing chord. 
    Thus, \(y'\) lies on the right side of \(ac\).
    There are three possible regions on the right side of \(ac\) for \(y'\), all leading to a contradiction.
    If \(y'\) lies on the right side of \(ab\) this would make \(y\) a point in a left component of \(ab\), a contradiction.
    If it lies in \(O_{abc}\) this would make \(yy'\) an exit edge. 
    Finally, if it lies in the triangle, the suffix of \(\pi\) connecting \(y'\) to \(r\) lies completely in \(abc\). 
    This is only possible if \(a\) is the root.
    Consider the last edge \(\overline{r}r\) on \(\pi\). This edge is in \(X^+_{ab}\), a contradiction to the assumption, that the suffix does not use any such edges.

    \begin{figure}
\centering
\includegraphics[page=3]{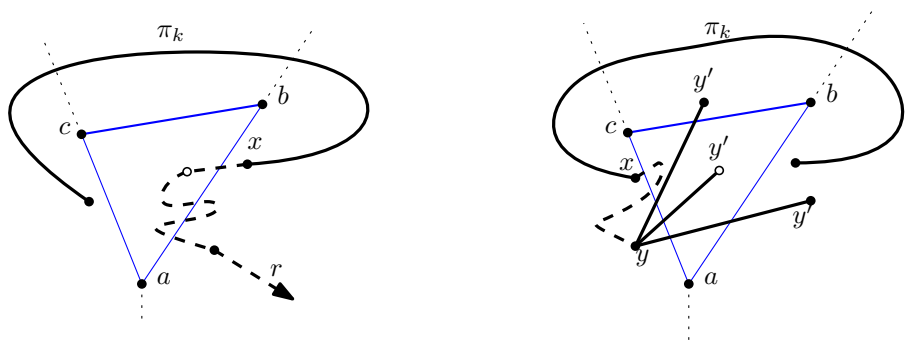}
\caption{(left) if \(\pi_k\) ends with a right endpoint in \(X^+_{ab}\) then \(ac\) is not right-facing. (right) the last edge in \(X^+_{ac}\) ends in one of the three regions, all lead to a contradiction.} 
\end{figure}

    %\item Symmetric to 1.% Alr
    \item W.l.o.g assume that \(p\) is in the right component of \(ab\). We show that $u$ is in the left component of \(ab\).
    %Let \(p\) be in the same connected component \(C\) of \(T\setminus X^+_{abc}\) as \(u\) and w.l.o.g assume that \(p\) is in the right component of \(ab\).
    
    Let \(\pi\) be the path connecting \(u\) to \(r\) in \(T\). %Then \(u\) is the last vertex on \(\pi\).
    %\todo{I do not see why $u$ is on the path from $p$ to $r$. I guess you mean $\pi$ is the path from $u$ to $r$?
    %K: If $u$ and $p$ are in the same component of \(T\setminus X^+_{abc}\) then there is a path in the MST between \(u\) and \(p\). Then \(uu'\) \emph{should?} be the only edge that exits this component, as otherwise there is a cycle in the MST. It follows that \(u\) is on \(\pi\). But maybe this argument is not even necessary...}
    As \(abc\) is not an exit triangle, there existss an edge \(vv'\) such that $v'$ is the first point of $\pi$ after $u$ that is outside of the triangle \(abc\).
    Note that \(vv'=uu'\) is possible. This also means that $vv'$ is either in \(X^+_{ab}\) or in \(X^+_{ac}\).
    %is in \(X^+_{ab}\) or in \(X^+_{bc}\).
    %Let \(vv'\) be the the first edge on \(\pi\) after \(u\), such that \(v'\) is outside of the triangle \(abc\). 

   %As \(abc\) is not an exit triangle, \(vv'\) is in \(X^+_{ab}\) or in \(X^+_{bc}\). 
   We consider two cases based on the position of \(v'\). 
    If \(vv'\in X^+_{ab}\) then \(v\) is in the left component of $ab$ by definition. %endpoint. 
    %However as \(v\) and \(p\) are in the same component of \(T\setminus X^+_{abc}\), they are also in the same connected component of \(T\setminus X^+_{ab}\).
   % As \(p\) is in a right component and \(v\) is a left endpoint, this is a contradiction to the assumption that \(ab\) is a valid chord.

    If \(vv'\in X^+_{ac}\) then \(v'\) is a left endpoint and thus in a left component of \(ac\).
    By part (a) of this lemma, \(v'\) is then also in a left component of \(ab\).
    However the subpath of \(\pi\) between \(u\) and \(v'\) does not go through \(X^+_{ab}\) by assumption.
    Hence, $v$ is also in the left component of $ab$.

    In both cases we get that \(v\) is in the left component of \(ab\). As the path between \(u\) and \(v\) does not use any edges of \(X_{abc}^+\), this implies that \(u\) also in in a left component of \(ab\).
    As \(p\) is in a right component of \(ab\), they cannot be in the same connected component of \(T\setminus X^+_{abc}\).
%    and this \(v'\) is on the right side of \(ab\), a contradiction to 1. of this lemma.
    The other cases follow similarly.

    \begin{figure}
\centering
\includegraphics[page=5]{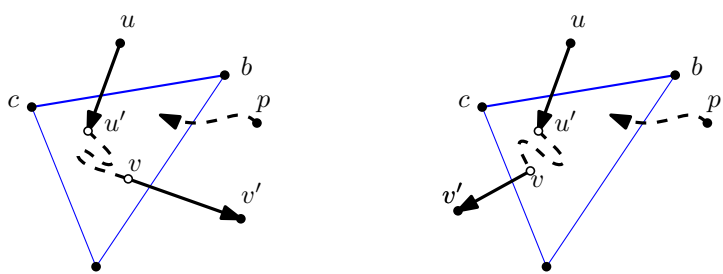}
\caption{Illustration of the proof of Lemma 8 (ii).} 
\end{figure}
    \end{enumerate}
\end{proof}

%\section{Proof of \autoref{lem:preprocessing}}
%\section{Structure lemma for digons, \autoref{lem:subproblem_structure_digon}}
\section{Relation between subproblems}
\label{sec:subproblem_structure_digon}

Here we describe the relations between the sets of points that define the subproblems. 
This is important to show that the subproblems can be computed in order of increasing sizes of \(V_{ab}\) and \(V_u\). 
\(\Delta_{abc}\) denotes the set of points in the interior of the triangle \(abc\), and
\(W_{abc}\) is the set of points \(w\) such that \(ww'\) is an entry edges for \(abc\). 
%[ previously called \(W^+_{abc}\) ] 

\begin{lemma}\label{lem:subproblem_structure_triangle}
    \begin{enumerate}[(a)]
\item
If \(abc\) is an LR-triangle, then 
    \begin{align*}
        V_{ab} =\{c\} \cup V_{ac} \cup \Delta_{abc} \cup \bigcup_{w\in W_{abc}} V_w %\cup \bigcup_{ww'\text{entry edge at \(b\) or \(c\)}} V_w
    \end{align*}
    Moreover,  $V_{ab}$ is a strict superset of $V_{ac}$.
\item  If \(abc\) is an RL-triangle, then 
    \begin{align*}
       V_{ac} =\{b\} \cup \vec V_{ab} \cup \Delta_{abc} \cup \bigcup_{w\in W_{abc}} V_w %\cup \bigcup_{ww'\text{entry edge at \(b\) or \(c\)}} V_w
    \end{align*}
    Moreover,  $V_{ac}$ is a strict superset of $V_{ab}$.
   \item  If \(abc\) is an exit triangle with exit edge \(u u'\), then
   \begin{align*}
       V_u = \{a,b,c\} \cup V_{ab} \cup V_{ac} \cup \Delta_{abc}  \cup \bigcup_{w\in W_{abc}} V_w
   \end{align*}
    Moreover, $V_u$ is a strict superset of $\overleftarrow V_{ab}$, % and 
    $\overrightarrow V_{ac}$ and %of
    $\bigcup_{w\in W_{abc}} V_w$.
    %    all    $V_u$ for $u\in W_{abc}$.
\end{enumerate}
%In each of these statements, the set on the left hand side is a strict superset of each set on the right hand side.
\end{lemma}

%\section{Proof of \autoref{lem:subproblem_structure_triangle}}
\begin{proof}
%Let \(X^+_{abc} = X^+_{ab} \cup X^+_{bc} \cup X^+_{ac}\).
Note that no path in $T$ joins a point inside $abc$ to a point outside $abc$ without passing through $X^+_{abc}$.

% \todo[inline]{G: I am confused about this proof.
% What we REALLY need in part 1 is
% $V_{ab} \supseteq \{c\} \cup V_{ac} \cup \bigcup_{u\in W_{abc}} V_u$
% (never mind $\Delta_{abc}$). For $\{c\}$ the CLAIM is at least there, but the other part looks more like proving $\subseteq$.
% $\Delta_{abc}\subseteq V_{ab}$ is used in the proof for $c$, but where is it proved.\\
% K: I think we actually need $=$ because if we only showed $\supseteq$ then there might be points in $V_{ab}$ that are not considered for the solution...
% }

\begin{enumerate}[(a)]
    \item 
    \begin{enumerate}
        \item[\(\subseteq\):] 
        First we show that every point \(p \in V_{ab}\) is at least one of the sets on the right hand side.  
    If \(p=c\) we are done.
    Otherwise, let \(r_p\) be the first vertex of the path \(\pi\) from \(p\) to \(r\) that is an endpoint in \(X^+_{abc}\).
 Note that \(r_p\) exists, as \(p\) is on the backward side of \(ab\) and thus \(\pi\) has to use at least one edge of \(X^+_{ab}\).
%     Let \(p\neq c \in V_{ab}\).
 If \(r_p \in \Delta_{abc}\), then \(p\) also lies in \(\Delta_{abc}\) as the path can only leave the triangle by using an edge in \(X^+_{abc}\).
     In the other case, if \(r_p\) is the left endpoint of an edge in \(X^+_{ac}\), then \(p\in V_{ac}\) as it lies in a backward component of this edge.
     Similarly, if \(r_p\in W_{abc}\) the statement follows directly.% by the observation above.
     
     \item[\(\supseteq\):]
     Now we show that all sets on the right are subsets of \(V_{ab}\).
     Let \(p\in V_{ac}\).
     %For a contradiction, assume that there is a \(p\in V_{ac}\)  but \(p\notin V_{ab}\).
     Then by definition, \(p\) is in a left component of the chord \(ac\) and in a right component of \(ab\).
     By \autoref{lem:sides_valid_triangles} we get \(p\in V_{ab}\).
     %This is a contradiction to \autoref{lem:sides_valid_triangles} and thus \(p\in V_{ab}\).
     
     Similarly, if \(p\in V_w\) for some \(w\in W_{abc}\) but not in \(V_{ab}\) consider the path \(\pi\) from \(p\) to \(r\).
     Let \(ww'\) be the entry edge defined by \(w\). 
    Let \(C\) be the component of \(T\setminus X^+_{abc}\) that contains \(w\) and \(\pi' = (x',\dots, w)\) the subpath of \(\pi\) in \(C\).
    Then \(x'\) lies outside of the triangle \(abc\) and either \(x'=p\) or \(x'\) is an endpoint for an edge in \(X^+_{ab}\cup X^+_{ac}\).
    By \autoref{lem:sides_valid_triangles}, \(x'\) cannot be in a left component of \(ac\) or a right component of \(ab\). 
    Thus \(x'\) is in a right component of \(ac\) and a left component of \(ab\). If \(x'=p\), then we have \(p\in V_{ab}\).
    If \(x'\neq p\), note that the only endpoints that define such components are in the wedge defined by the rays through \(ab\) and \(ac\).
    As \(x'\) does not lie inside the triangle, \(xx'\) crosses \(bc\) from the inside to the outside, making it an exit edge. 
    This would mean that $abc$ is an exit triangle but $abc$ is an LR-triangle. Hence, we always have $x'=p$ and therefor \(p\in V_{ab}\).
    %This is a contradiction to the assumption that \(abc\) is not an exit triangle.

    Finally, let \(p\in \Delta_{abc}\).
    Assume to the contrary that \(p\notin V_{ab}\). Then the first edge  \(yy'\) in \(X^+_{ab}\) on the path \(\pi\) from \(p\) to the root crosses \(ab\) from right to left.
    Thus \(\pi\) has to leave the triangle before \(yy'\). As there are no exit edges, the edge \(xx'\) leaving the triangle is in \(X^+_{ac}\), making \(x'\) a point right of \(ab\) but left of \(ac\), a contradiction to \autoref{lem:sides_valid_triangles}.
    Hence, we get $p \in V_{ab}$.

    To see that \(c\in V_{ab}\), consider the edge \(cc'\) on the path from \(c\) to \(r\).
    If \(cc'\in X_{ab}\), then \(c\in V_{ab}\) by definition. 
    In the other case \(c\) is in the same connected component of \(T\setminus X^+_{ab}\) as \(c'\).
    As \(abc\) is not an exit triangle, \(c'\) either is left of \(ac\), implying \(c'\in V_{abc}\) or in \(\Delta_{abc}\) implying \(c'\in \Delta_{abc}\). In both cases we showed above, that \(c'\in V_{ab}\) and thus also \(c\in V_{ab}\).
    \end{enumerate}
%    \end{description}

The strict inclusion follows as \(c\notin V_{ac}\) by definition.

\item Analogous.
\item 
\begin{description}
    \item[\(\subseteq\):] 
    First, we show that every point in \(V_u\) is contained in at least one of the sets on the right side of the equation.
Let \(p\in V_u\). If \(p\in\{a,b,c\}\) we are done. In the other case, again let \(r_p\) be the first endpoint of an edge in \(X^+_{abc}\) on the path from \(p\) to the root.
As \(uu'\) is the exit edge of the triangle and \(p\in V_u\), \(r_p\) exists.
If \(r_p\in\Delta_{abc}\), then \(p\in \Delta_{abc}\) as the part of \(\pi\) connecting \(p\) to \(r_p\) does not leave the triangle.
In the other case, \(r_p\) is a right endpoint of an edge in \(X^+_{ab}\), a left endpoint of an edge in \(X^+_{ac}\) or an outward point of an edge in \(W^+_{abc}\).
By the same arguments as in (a) it follows, that \(p\in V_{ab}, p\in V_{ac}\) or \(p \in V_{r_p}\) respectively.
% If \(u = c\) then \(c\in V_u\) by definition. In the other case consider the edge \(cc'\). Then \(c' \in \Delta_{abc}\) or in  \(V_{ac}\) depending on the placement of \(c'\). In any case \(c'\in V_u\) and thus also \(c\in S_u\).
% The same argument holds for \(a\) and \(b\). 
\item[\(\supseteq\):]
Now we show that every point in one of the sets on the right is also contained in \(V_u\).
Let \(p\in V_{ab} \cup V_{ac}\) and let \(xx'\) be the last edge on the path \(\pi\) from \(p\) to \(r\) that is in \(X^+_{ab}\cup X^+_{ac}\).
Then \(x'\) is a right endpoint in \(X^+_{ac}\) or a left endpoint in \(X^+_{ab}\).
Thus \(x \notin X^+_{ab}\cap X^+_{ac}\).
If \(xx'\) is also in \(X^+_{bc}\), then \(xx' = uu'\) as \(uu'\) is the only exit edge of \(abc\).
In the other case \(x'\) is in the interior of \(abc\) and \(\pi\) has to exit the triangle through \(bc\). 
Again, \(uu'\) is the only exit edge and thus \(\pi\) goes through \(u\).

If \(p\in \Delta_{abc}\), the same argument holds, as either the path exits the triangle directly through the exit edge, or it goes through an edge in \(X^+_{ab} \cup X^+_{ac}\) to point \(x\) in \(V_{ab}\cup V_{ac}\). 
Then \(p\) and \(x\) are in the same subtree of \(T\) and thus \(p\in V_{u}\).

Now assume \(p \in \{a,b,c\}\) and let \(pp' \) be the first edge on the path from \(p\) to \(r\).
Then \(pp'\in X^+_{abc}\).
If \(pp' \in X^+_{ab} \cup X^+_{ac}\) or \(p' \in \Delta_{abc}\), then \(p'\in V_{ab}\cup V_{ac}\cup \Delta_{abc}\). 
From the argument above it follows that \(p'\in V_u\) implying that also \(p\in V_u\).
In the other case, \(pp'\in W_{abc}\) and thus \(pp'\) is the exit edge, implying \(p=u\). As \(u\in V_u\) by definition the statement follows.

\end{description}

\(V_{ab}\) and \(V_{ac}\) are strict subsets of \(V_u\) as none of them contain \(a,b\) or \(c\).
Furthermore, we have that \(V_w\subseteq V_u\) for each \(w\in W_{abc}\).
As the  unique outgoing edge of \(w\) is an entry edge and the unique outgoing edge of \(u\) is an exit edge for \(abc\), we have \(u\neq w\) and thus \(V_w \subset V_u\) for all \(w\in W_{abc}\).
\qedhere
\end{enumerate}
\end{proof}

%\section{CONTINUE}
There is an analogous statement for digons: %\autoref{lem:subproblem_structure_digon}, which is given
%\autoref{sec:subproblem_structure_digon}.

\begin{lemma}\label{lem:subproblem_structure_digon}
\begin{enumerate}[(a)]
    \item If \(ab\) is a valid entry digon, then 
    \begin{align*}
        V_{ab} = \bigcup_{\substack{ww' \in X^+_{ab}}} V_w
    \end{align*}
    If \(|X^+_{ab}| = 1\) then \(V_{ab} =V_w\) for the unique backward endpoint \(w\).
    
     \item If \(ab\) is a valid exit digon with exit edge \(u u'\), then
    \begin{align*}
        V_u = \{a,b\} \cup V_{ab} \cup \bigcup_{ww' \in X^+_{ab}\setminus \{uu'\}} V_w
    \end{align*}
    \(V_u\) is a proper superset of each \(V_w\) on the right side.%\todo{should the right side include $b$ and is $X_{ab}$ correct (without the $+$)?}
    \end{enumerate}
\end{lemma}
\begin{proof}
\begin{enumerate}[(a)]
    \item Let \(p\in V_{ab}\) and \(r_p\) be the first endpoint of an edge in \(X^+_{ab}\) on the path from \(p\) to the root.
    As \(ab\) is an entry digon, \(r_p\) lies on the backward side of \(ab\), and thus \(p\in V_{r_p}\).
    If there is more than one edge in \(X^+_{ab}\) then the backwards endpoints do not lie in the same connected components of \(T\setminus X^+_{ab}\) and thus \(V_{ab}\) is a proper super set of all sets \(V_w\).
    \item 
    % The second part is the same argument as above. 
    % To see that \(a\in V_u\) let \(a'\) be the successor of \(a\) on the path to the root. If the exit edge is \(aa'\), \(a\in T_u\) by definition.
    % In the other case, \(a'\) is a backward endpoint of \(ab\) and thus \(a'\in V_{ab} \subseteq V_{u}\) and thus also \(a\in V_{u}\).
    % As \(a\notin V_{ab}\) ...

    This proof is similar to \autoref{lem:subproblem_structure_triangle}.
\begin{description}
    \item[\(\subseteq\):] 
    First, we show that every point in \(V_u\) is contained in at least one of the sets on the right side of the equation.
Let \(p\in V_u\). If \(p=a\) or \(p=b\) we are done. In the other case,  let \(x\) be the first endpoint of an edge in \(X^+_{ab}\) on the path $\pi$ from \(p\) to the root.
As \(uu'\) is the exit edge of the digon  and \(p\in V_u\), \(x\) exists. 
Then $x$ is either in the backward or the forward component of $T\backslash X^+_{ab}$.
If it is in the backward component, then $x \in V_{ab}$ and therefore als $p \in V_{ab}$.
If $x$ is in the forward component, let $x'$ be next point on $\pi$. Then $xx'\in X^+_{ab}\backslash\{uu'\}$. Hence, the path from $p$ to $x$ is a subpath of $V_x$. Therefore $p\in V_x$.

\item[\(\supseteq\):]
Now we show that every point in one of the sets on the right is also contained in \(V_u\).

Let \(p\in V_{ab}\) and let \(xx'\) be the last edge on the path \(\pi\) from \(p\) to \(r\) that is in \(X^+_{ab}\).
Then \(x'\) is a point on the forward side in \(X^+_{ab}\) since $p$ is on the backward side. This means that $xx'$ is an exit edge and hence $xx'=uu'$. So $u$ is on $\pi$ and therefore $p \in V_u$.

If $p\in\{a,b\}$, then consider the path $\pi$ from $p$ to $r$. Since $uu'$ is the unique exit edge of the chord $ab$, $\pi$ contains $u$. Hence, $p\in V_u$.

Now let $p \in V_w$ for some $ww' \in X^+_{ab} \backslash \{uu'\}$. If $w'$ is a endpoint on the backward side of $X_{ab}$, then $w'\in V_{ab}\subset V_u$ and $p\in V_u$.
If $w' \in\{a,b\}$, then $w'\in V_u$ by the case above. Hence, $p\in V_u$.
\end{description}

    Note that $u\in V_u$ but $u\notin V_w$ for any $w$ with $ww' \in X^+_{ab}\setminus \{uu'\}$ 
    \qedhere
\end{enumerate}
\end{proof}

\section{Preprocessing}
\label{sec:preprocessing}
%\hrule
\begin{lemma}\label{lem:preprocessing}
    The following preprocessing information can be easily computed in \(O(n^3)\) total time and stored in \(O(n^3)\) space:
    \begin{itemize}
        \item The MST \(T\) rooted at the lowest point~$r$.
        \item For every point $u$, the size of the subtree $V_u$ of the MST rooted at $u$.
\item For every pair of points $a,b$, 
the sets \(X^+_{ab}\) and \(X_{ab}\).
\item For every potential chord $ab$, 
%the set \(X_{ab}\),
the validity of $ab$ as a chord, the partition of $V\setminus\{a,b\}$ into
backward and forward components, the set $V_{ab}$ of vertices in the backward component, and its size.
\item 
For each ordered pair $(b,c)$ of points, the
%set $X_{bc}$ and its
classification of $X_{bc}$ 
into entry and exit edges as well as \(W_{bc}\) and \(\bigcup_{w\in W_{ab}} V_w\).
        \(\texttt{wall}[b,c]\)
        \item For every edge \(ww'\) of the MST, the set \(ET_w\) of directed segments \(bc\) such that
        $ww'$ is the unique exiting edge in $X_{bc}$. 
    \end{itemize}
\end{lemma}
\begin{proof}
The MST \(T\) can be computed in \(O(n \log n)\) time~\cite{shamosClosestpointProblems1975}.
A post order-tree traversal directly gives the sets \(V_u\). As the total size of these sets is \(O(n^2)\), reporting all sets takes \(O(n^2)\) time.
To compute \(X^+_{ab}\) and \(X_{ab}\), the MST \(T\) is traversed once for every pair \(a,b\) of points, explicitly testing all edges in \(T\) for containment in the sets. 
For each edge in the set, we store which endpoint is the left and which is the right endpoint.
During the traversal we keep track and store the connected components of \(T\) in \(X^+_{ab}\).
Whenever an edge \(uu'\in X^+_{ab}\) is encountered, the children of \(u\) are in a new connected component.
Each connected component is then traversed. 
Consider the component that contains \(r\). If all leaves in this component are on the same side, then this side will be the forward side. Otherwise, \(ab\) is not a valid chord.
For all other component, check if its root and all leaves are on the same side.
If this is not the case, then \(ab\) is not a valid chord.
In the other case,  assign the component to the side defined by the vertices.
If they are in the backward side, add all vertices of the component to \(V_{ab}\).
This takes \(O(n)\) time for each pair and thus \(O(n^3)\) time overall.

Now consider a directed pair \((b,c)\) of points and iterate over \(X_{bc}\). 
Every edge \(ww'\) with \(w\) left of the directed line through \(bc\) is an entry edge. Add \(w\) to the set \(W_{bc}\).
The remaining edges are exit edges.
If a directed pair \((b,c)\) has only one exit edge, then add \(bc\) to \(ET_w\).
For each entry edge \(ww'\) traverse the subtree of \(T\) rooted at \(w\) to collect \(\bigcup_{w\in W_{ab}} V_w\).
This can be done by one traversal of \(w\), if the vertices \(w\) are stored in  \(W_{bc}\) in a traversal order on the tree.
Thus again, considering one directed pairs \((b,c)\) takes \(O(n)\) time, and the overall time of \(O(n^3)\) follows.
\end{proof}

\section{Solution of chord and edge problems}
\label{sec:solution}
The chord and edge subproblems are solved 
%by a dynamic programming approach 
as follows:
\subparagraph{{Chord} problems.}
Let \(ab\) be a valid chord.
Assume
without loss of generality  that it is right-facing. 
For the case that \(ab\) is an interior chord,
we have to consider all
\(c\) such that \(abc\)
is a counterclockwise triangle.
We can check whether $abc$ is a valid LR-triangle
by checking that (a) $ac$ is a valid right-facing chord,
(b) the wall $bc$ has no exit edges, and in addition, (c)~there are no outgoing MST edges from $b$ or $c$ into the
region $O_{abc}$.
Conditions~(a) and (b) have been tested during preprocessing, and
the last condition can be checked in constant time
 because the
degree of a Euclidean MST is bounded by~6.

Then:
The optimum length of the solution that has \(abc\) as a valid LR-triangle in the blob is the sum of the length of all problems $\texttt{wall}[bc]$ plus the length of all problems \(\texttt{edge}[w]\) for \(w\in W_{abc}\setminus W_{bc}\) plus the length of \(\texttt{chord}[a,c]\) plus length of the side \(bc\).

The other possibility is that $ab$ forms a left entering digon:
This possibility is valid if all edges in \(X_{ab}\) cross from the backwards to the forward side. 
Then the solution of the subproblem \(\texttt{chord}[a,b]\) is the sum of \(\texttt{edge}[u]\) for \(uu' \in X_{ab}\) plus the length of the line segment \(ab\).

The value \(\texttt{chord}[a,b]\) is then the minimum value over all \(LR\)-triangles and the digon case. %\todo{This is a bit ugly in the write up, find a nicer structure!}

%\subparagraph{\texttt{wall} problems.}
%Let \(bc\) the candidate wall. Then \(\texttt{wall}[b,c] = \sum_{w\in W_{bc}} \texttt{edge}[w]\).

\subparagraph{{Edge} problems.}
Let \(u \in P\) and \(uu'\) be the edge in the MST that is defined by $\texttt{edge}[u]$.
Then in any solution, \(u\) is either incident to only tree-edges or it connects a blob. 
In the first case, the length of the solution is simply \(\Vert uu' \Vert\) plus the value of \(\texttt{edge}[v]\) for all children of \(v\) in the MST.
In the second case, there is an exit triangle or exit digon with \(uu'\) being the unique exiting edge. 
Then the length of the solution is the minimum over all valid exit triangles \(abc\) with exiting edge \(uu'\).
Let \(abc\) be a fixed exit triangle such that \(uu'\) is the exit edge.
Then the size of the solution is the \(\Vert bc\Vert + \texttt{chord}[a,b] + \texttt{chord}[a,c] + \sum_{w\in W_{abc}} \texttt{edge}[w]\).
For an exit digon \(ab\) with exit edge \(uu'\) the size of the optimal solution is \(\Vert ab\Vert\) added to the sum of \(\texttt{edge}[w]\) for all \(w \in X_{ab}\) where \(w\) is on the forward side of \(ab\).

The problem for the root can be considered as a degenerate edge problem with a virtual predecessor \(r'\) with \(\Vert rr'\Vert = 0\).
\todo{I am not sure if this is the correct way of phrasing it. Maybe we need a different wording for exit digons?}

\section{Runtime analysis, proof of \autoref{thm:algo}}
\label{sec:runtime}
    For the running time, note that by \autoref{lem:preprocessing} the MST and all relevant information can be preprocessed in \(O(n^3)\) time.
    There is a total of \(O(n^2)\) subproblems of the type \(\texttt{chord}[a,b]\). For each the at most \(O(n)\) elements $c$ in \(V_{ab}\) are traversed in \(O(n)\) total time.
    As the degree of any vertex in the MST is at most \(6\), only \(O(1)\) \texttt{edge} problems are considered in condition to one \texttt{chord} and one \texttt{wall} problem.\todo{This is repeated somewhere else in more detail.}
    Thus the overall time for the \texttt{chord} problems is \(O(n^3)\).
    For the \(O(n^2)\) wall problems, each takes \(O(n)\) \texttt{edge}-subproblems, for an overall of \(O(n^3)\) time.

    On the other hand, there are \(O(n)\) subproblems \(\texttt{edge}[u]\).
    For each of them, the first case of the recurrence uses \(O(n)\) smaller subproblems. 
    Here we take the minimum over all candidate walls where \(uu'\) is an exit edge.
    This set is precomputed by \autoref{lem:preprocessing}.
    Aditionally, we consider the \(O(n)\) candidate walls where one endpoint is adjacent to \(u\).
    As each wall can only have one edge exiting, over all subproblems \(\texttt{edge}[u]\), each wall is considered for exactly one \texttt{edge} problem.
    For each candidate wall, the \(O(n)\) possible vertices \(a\) are considered. 
    The recurrence again only refers to \(O(1)\) subproblems.
    Thus, \(O(n^3)\) total time is needed for all \texttt{edge} problems.
%\section{END}
%\small

%\clearpage
%\bigskip
%\def\addvspace#1{\vspace {2.1pt}} % hack
\tableofcontents
%\eject
\end{document}